\def\BibTeX{{\rm B\kern-.05em{\sc i\kern-.025em b}\kern-.08em
    T\kern-.1667em\lower.7ex\hbox{E}\kern-.125emX}}
\newtheorem{theorem}{Theorem}
\newtheorem{lemma}{Lemma}
\newtheorem{remark}{Remark}
\newtheorem{proposition}{Proposition}
\newcommand{\ignore}[1]{}
\newcommand{\Rb}{\mathbb{R}}
\newcommand{\ST}{\mathtt{S}}
\newcommand{\IT}{\mathtt{I}}
\newcommand{\GG}{\mathcal{G}}
\newcommand{\VV}{\mathcal{V}}
\newcommand{\EE}{\mathcal{E}}
\newcommand{\II}{\mathcal{I}}
\newcommand{\RR}{\mathcal{R}}
\newcommand{\DD}{\mathcal{D}}
\newcommand{\SUt}{\mathtt{SU}}
\newcommand{\SPt}{\mathtt{SP}}
\newcommand{\IPt}{\mathtt{IP}}
\newcommand{\IUt}{\mathtt{IU}}
\newcommand{\betau}{\beta_\mathtt{U}}
\newcommand{\betap}{\beta_\mathtt{P}}
\newcommand{\cp}{c_\mathtt{P}}
\newcommand{\oprocendsymbol}{\hbox{$\bullet$}}
\newcommand{\oprocend}{\relax\ifmmode\else\unskip\hfill\fi\oprocendsymbol}
\begin{document}

\title{Game-Theoretic Protection Adoption Against Networked SIS Epidemics}

\author{Abhisek~Satapathi and
        Ashish R. Hota,~\IEEEmembership{Senior Member, IEEE}
\IEEEcompsocitemizethanks{\IEEEcompsocthanksitem A. Satapathi and A. R. Hota are with the Department of Electrical Engineering, Indian Institute of Technology (IIT) Kharagpur,
West Bengal, India, 721302.\protect\\
E-mail: abhisek.ee@iitkgp.ac.in, ahota@ee.iitkgp.ac.in}}

\IEEEtitleabstractindextext{%
\begin{abstract}
In this paper, we investigate game-theoretic strategies for containing spreading processes on large-scale networks. Specifically, we consider the class of networked susceptible-infected-susceptible (SIS) epidemics where a large population of agents strategically choose whether to adopt partially effective protection. We define the utilities of the agents which depends on the degree of the agent, its individual infection status and action, as well as the the overall prevalence of the epidemic and strategy profile of the entire population. We further present the coupled dynamics of epidemic evolution as well as strategy update which is assumed to follow the replicator dynamics. By relying on timescale separation arguments, we first derive the optimal strategy of protection adoption by the agents for a given epidemic state, and then present the reduced epidemic dynamics. The existence and uniqueness of endemic equilibrium is rigorously characterized and forms the main result of this paper. Finally, we present extensive numerical results to highlight the impacts of heterogeneous node degrees, infection rates, cost of protection adoption, and effectiveness of protection on the epidemic prevalence at the equilibrium.   
\end{abstract}

\begin{IEEEkeywords}
Spreading processes, susceptible-infected-susceptible epidemic, game theory, equilibrium, large-scale networks.
\end{IEEEkeywords}}


\maketitle

\section{Introduction}
\label{section:introduction}

Effective containment of spreading processes, such as infectious diseases spreading in society \cite{nowzari2016analysis,hethcote2000mathematics}, opinions spreading via social networks \cite{lopez2008diffusion} and viruses spreading on computer networks \cite{sellke2008modeling}, has proven to be challenging for two main reasons. First, deploying centralized control strategies is often impractical due to the large-scale nature of the networked system, and hence, decentralized strategies need to be developed \cite{huang2022game}. Second, the entities present in the network are often heterogeneous in terms of their connectivity patterns \cite{newman2010networks,zino2021analysis}. In order to address the above challenges, we propose a game-theoretic model where a large number of heterogeneous agents strategically choose to adopt (partially effective) protection measures against the class of Susceptible-Infected-Susceptible (SIS) epidemics on networks. 

While mathematical modeling and analysis of epidemics has a long history \cite{hethcote2000mathematics,nowzari2016analysis,pastor2015epidemic}, the recent COVID-19 pandemic has led to renewed interest in this topic. In particular, a substantial body of recent work has examined the impacts of decentralized containment strategies against epidemics in the framework of game theory; see \cite{huang2022game} for a recent comprehensive review. Two broad classes of containment strategies have been examined using game theory. In the first line of work, the decision-makers or agents adopt vaccination against the disease \cite{bauch2004vaccination,trajanovski2015decentralized,trajanovski2017designing,hota2019game} by evaluating the trade-off between the cost of vaccine and probability of being infected in the steady-state of the SIS epidemic dynamics. While earlier works (such as \cite{bauch2004vaccination}) considered a homogeneous population of agents, later works (such as \cite{trajanovski2015decentralized,hota2019game,frieswijk2023polarized}) considered agents with heterogeneous degrees. 

In the second line of work, agents adopt protection measures, such as wearing masks, social distancing, etc., against the epidemic. While vaccination is typically a one-time irreversible decision, protection adoption is often reversible, and the agents can revise their action or strategy dynamically as the epidemic builds up or wanes. As a result, the evolution of the epidemic and the actions of the agents often evolve in a comparable time-scale. Accordingly, several recent papers, including \cite{theodorakopoulos2013selfish,paarporn2023sis,frieswijk2022mean,satapathi2023coupled}, have analyzed the dynamics of coupled evolution of the SIS epidemic and protection adoption behavior, its equilibria and stability for a large population of homogeneous agents, while \cite{eksin2016disease,hota2022impacts,liu2022herd} have studied the above phenomenon for agents on a network with heterogeneous node degrees. A few related works \cite{she2022networked,xu2024discrete} have also examined the coupled evolution of opinion and epidemic as well as opinion and action \cite{aghbolagh2023coevolutionary}. Finally, similar investigations \cite{altman2022mask,amini2022epidemic,kordonis2022dynamic} have also been carried out for the class of Susceptible-Infected-Recovered (SIR) epidemics and its variants where recovery grants permanent immunity from future infections. 

In this paper, we generalize the settings examined in prior works \cite{satapathi2022epidemic,satapathi2023coupled} to include networked interaction among agents, and further generalize the setting in \cite{hota2022impacts} to account for partially effective protection.  Specifically, each agent is either susceptible or infected at a given point of time, and chooses whether to adopt protection or remain unprotected. For a susceptible individual, adopting protection gives partial immunity from the disease, while an infected protected individual causes new infection with a smaller probability compared to an infected individual who does not adopt protection. We assume that the agents are divided into different sub-populations depending on their degree, and that the epidemic evolution is governed by the degree-based mean-field (DBMF) approximation of the SIS epidemic \cite{pastor2015epidemic}. 

For the proposed setting, we define the utility of each sub-population which depends on their individual infection status, chosen action and the disease state and strategies adopted by the entire population. We formulate the coupled disease-behavior dynamics, and leverage time-scale separation arguments to derive the Nash equilibrium strategies for a given epidemic state, i.e., we assume that the evolution of protection adoption is faster than the evolution of the disease. Thus, our work is complementary to the closely related setting \cite{liu2022herd} which assumed the epidemic dynamics to be the faster dynamics and the behavior adoption to be the slower dynamics. We then derive the (slower) dynamics of epidemic evolution when all agents adopt their equilibrium strategies; this dynamics takes the form of a switched or hybrid system. We rigorously prove the existence and uniqueness of its equilibrium by leveraging several structural properties of the endemic equilibrium. Numerical simulations show that the coupled dynamics converges to the equilibrium. In addition, we numerically illustrate the impacts of heterogeneous node degrees, degree-dependent infection rates, and cost of protection adoption on the expected fraction of infected nodes at the equilibrium.  


\section{Networked SIS Epidemic under Partially Effective Protection}
\label{section:dbmf_game}

\subsection{Degree-Based Mean-Field Approximation}
We consider a large population of agents, where each agent has a specific degree (number of neighbors) from the set $\DD \in \{1,2,\ldots,d^{\max}\}$. Let $y^d(t) \in [0,1]$ be the proportion of agents with degree $d \in \DD$ that is infected at time $t$, with $1-y^d(t)$ be the proportion that is susceptible. Let $z^d_{\ST}(t)$ and $z^d_{\IT}(t)$ denote the proportion of susceptible and infected agents with degree $d$ that remain unprotected at time $t$, respectively. Let $\mathbf{z} := \{z^d_{\ST},z^d_{\IT}\}_{d\in\DD}$ the strategy profile of the entire population, and let $\mathbf{x} := \{y^d,z^d_{\ST},z^d_{\IT}\}_{d\in\DD}$ denote the (time-varying) social state. 

An infected individual with degree $d$ transmits the infection with probability $\betap^d \in (0,1)$ when it adopts protection, and with probability $\betau^d \in (0,1)$ when it is unprotected. A susceptible individual that adopts protection is $\alpha \in (0,1)$ times (less) likely to become infected compared to a susceptible unprotected individual. Finally, an infected individual recovers with probability $\gamma \in (0,1)$. 

Following the DBMF approximation of the SIS epidemic model, the infected proportion $y^d(t)$ evolves in continuous-time as
\begin{equation}\label{eq:sis_degree_d}
    \dot{y}^d(t) = -\gamma y^d(t) + (1-y^d(t)) (z^d_{\ST}(t) + \alpha (1-z^d_{\ST}(t))) d \Theta(\mathbf{x}(t)),
\end{equation}
where $\Theta(\mathbf{x})$ is the probability with which a randomly chosen neighbor of a node with degree $d$ transmits infection to it. The quantity $(z^d_{\ST}(t) + \alpha (1-z^d_{\ST}(t)))$ captures the fact that among the susceptible proportion, $z^d_{\ST}(t)$ fraction does not adopt protection and encounters an infection probability given by $d \Theta(\mathbf{x}(t))$, while $1-z^d_{\ST}(t)$ fraction adopts protection and encounters a smaller infection probability $\alpha d \Theta(\mathbf{x}(t))$. We define
\begin{equation}\label{eq:def_theta}
\Theta(\mathbf{x}) := \sum_{d \in \DD} \Big[\frac{dm_d}{d^{\mathtt{avg}}}  (\betau^dz^d_{\IT}+\betap^d(1-z^d_{\IT})) y^d \Big],
\end{equation}
where $m_d$ is the proportion of nodes with degree $d$ in the entire population and $d^{\mathtt{avg}} = \sum_{d \in \DD} dm_d$. Without loss of generality, we assume $m_d > 0$ for all $d \in \DD$. The first term specifies the probability of a randomly chosen neighbor having degree $d$ in accordance with the configuration model \cite{newman2010networks}. The second term denotes the probability of becoming infected if it comes in contact with an infected neighbor of degree $d$ which depends on the strategy adopted by infected agents having degree $d$. The third term denotes the probability of the neighbor with degree $d$ being infected in the first place. 

We now analyze the steady-state of the epidemic dynamics \eqref{eq:sis_degree_d} for a given strategy profile $\mathbf{z}$ of the population. By setting $\dot{y}^d(t)=0$, we obtain
\begin{align}
    & \gamma y^d = (1-y^d) (z^d_{\ST} + \alpha (1-z^d_{\ST})) d \Theta(\mathbf{z})) \nonumber
    \\ \implies & (\gamma + (z^d_{\ST} + \alpha (1-z^d_{\ST})) d \Theta(\mathbf{z})) y^d \nonumber
    \\ & \qquad = (z^d_{\ST} + \alpha (1-z^d_{\ST})) d \Theta(\mathbf{z}) 
    \\ \implies & y^d = \frac{(z^d_{\ST} + \alpha (1-z^d_{\ST})) d \Theta(\mathbf{z})}{\gamma + (z^d_{\ST} + \alpha (1-z^d_{\ST})) d \Theta(\mathbf{z})} \label{eq:yd_equilibrium}
    \\ \implies & \Theta(\mathbf{z}) = \sum_{d \in \DD} \Big[\frac{dm_d}{d^{\mathtt{avg}}} (\betau^dz^d_{\IT}+\betap^d(1-z^d_{\IT})) \times \nonumber
    \\ & \qquad \qquad \qquad \qquad \frac{(z^d_{\ST} + \alpha (1-z^d_{\ST})) d \Theta(\mathbf{z})}{\gamma + (z^d_{\ST} + \alpha (1-z^d_{\ST})) d \Theta(\mathbf{z})} \Big] \label{eq:theta_equilibrium_1}
    \\ \implies & \Theta(\mathbf{z}) \Big[1- \sum_{d \in \DD} \Big[\frac{dm_d}{d^{\mathtt{avg}}} (\betau^dz^d_{\IT}+\betap^d(1-z^d_{\IT})) \times \nonumber
    \\ & \qquad \qquad \frac{(z^d_{\ST} + \alpha (1-z^d_{\ST})) d }{\gamma + (z^d_{\ST} + \alpha (1-z^d_{\ST})) d \Theta(\mathbf{z})}\Big]\Big] = 0. \label{eq:theta_equilibrium_identity}
\end{align}
Note that \eqref{eq:theta_equilibrium_1} is obtained by substituting the expression of $y^d$ obtained in \eqref{eq:yd_equilibrium} in the definition of $\Theta(\mathbf{z})$ given in \eqref{eq:def_theta}. It now follows that $\Theta(\mathbf{z})=0$ is always a solution of \eqref{eq:theta_equilibrium_identity} which corresponds to the disease-free equilibrium. In addition, there may be nonzero solution(s) $\Theta(\mathbf{z})$ of \eqref{eq:theta_equilibrium_identity} depending on the strategy profile $\mathbf{z}$ and other parameters as formalized in the following lemma.

\begin{lemma}\label{lemma:dbmf_theta_nonzero}
Equation \eqref{eq:theta_equilibrium_identity} admits a nonzero solution $\Theta^\star(\mathbf{z})$ if and only if 
    \begin{equation*}
        1 < \frac{1}{\gamma}\sum_{d \in \DD} \Big[\frac{d^2m_d}{d^{\mathtt{avg}}} (\betau^dz^d_{\IT}+\betap^d(1-z^d_{\IT}))  (z^d_{\ST} + \alpha (1-z^d_{\ST}))\Big].
    \end{equation*}
Furthermore, $\Theta^\star(\mathbf{z})=1$ is not a solution of \eqref{eq:theta_equilibrium_identity}.
\end{lemma}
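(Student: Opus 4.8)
The plan is to reduce the statement to an elementary monotonicity argument for a scalar function. For each $d \in \DD$ abbreviate $a_d := z^d_{\ST} + \alpha(1-z^d_{\ST})$ and $b_d := \betau^d z^d_{\IT}+\betap^d(1-z^d_{\IT})$. Since $\alpha \in (0,1)$ and $\betap^d,\betau^d \in (0,1)$, we have $a_d \in [\alpha,1]$ and $b_d \in (0,1)$, so both are strictly positive and, crucially, $b_d < 1$. Define the finite sum
\[
h(\Theta) := \sum_{d \in \DD} \frac{d^2 m_d\, a_d\, b_d}{d^{\mathtt{avg}}\,(\gamma + a_d\, d\,\Theta)},
\]
so that \eqref{eq:theta_equilibrium_identity} reads $\Theta\,(1 - h(\Theta)) = 0$. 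A nonzero solution therefore exists if and only if $h(\Theta) = 1$ admits a solution with $\Theta \neq 0$; since $\Theta$ is a probability I restrict attention to $\Theta \in (0,1]$ (and will in fact see that any positive solution automatically lies in $(0,1)$).

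Next I would record the elementary properties of $h$ on $[0,\infty)$. Each summand has a positive constant numerator and a denominator $\gamma + a_d\, d\,\Theta$ that is positive (as $\gamma>0$) and strictly increasing in $\Theta$; hence $h$ is continuous and strictly decreasing on $[0,\infty)$, with $h(\Theta)\to 0$ as $\Theta\to\infty$. Evaluating at $\Theta = 0$ gives $h(0) = \tfrac{1}{\gamma}\sum_{d\in\DD}\tfrac{d^2 m_d}{d^{\mathtt{avg}}}\,b_d\, a_d$, which is exactly the right-hand side of the claimed inequality.

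The one quantitative point—and the step I expect to be the main obstacle—is to show $h(1) < 1$ unconditionally. For each $d$, since $\gamma > 0$ we have $\frac{a_d}{\gamma + a_d d} < \frac{a_d}{a_d d} = \frac{1}{d}$, and since $0 < b_d < 1$ this yields $\frac{d^2 m_d a_d b_d}{d^{\mathtt{avg}}(\gamma + a_d d)} < \frac{d m_d}{d^{\mathtt{avg}}}$; summing over $d \in \DD$ and using $\sum_{d\in\DD} d m_d = d^{\mathtt{avg}}$ gives $h(1) < 1$. This already disposes of the last assertion: $1 - h(1) > 0$, so $\Theta = 1$ does not satisfy $\Theta(1 - h(\Theta)) = 0$.

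Finally I would assemble the three facts. If $h(0) > 1$, then $h$ is continuous with $h(0) > 1 > h(1)$, so the intermediate value theorem produces some $\Theta^\star \in (0,1)$ with $h(\Theta^\star) = 1$, and strict monotonicity of $h$ makes it the unique positive solution. Conversely, if $h(0) \le 1$, then $h(\Theta) < h(0) \le 1$ for every $\Theta > 0$, so $h(\Theta) = 1$ has no nonzero solution. Hence \eqref{eq:theta_equilibrium_identity} admits a nonzero solution precisely when $h(0)$—that is, the displayed sum—exceeds $1$. Everything except the bound $h(1) < 1$ is soft analysis; the real conceptual content is recognizing the scalar reformulation $\Theta(1-h(\Theta))=0$ and the monotonicity of $h$, with the inequality $h(1)<1$ being what simultaneously pins down the threshold, guarantees uniqueness, and rules out $\Theta^\star = 1$.
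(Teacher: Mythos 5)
Your proof is correct and follows essentially the same route as the paper's: both reduce \eqref{eq:theta_equilibrium_identity} to the scalar fixed-point condition $h(\Theta)=1$, use strict monotonicity of $h$ in $\Theta$, establish $h(1)<1$ via the same two bounds ($a_d d/(\gamma+a_d d)<1$ and $b_d<1$), and conclude that a nonzero solution exists iff $h(0)>1$. Your write-up is slightly more explicit about invoking the intermediate value theorem and about uniqueness of the positive root, but the substance is identical.
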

\begin{proof}
    Note that for $\Theta^\star(\mathbf{z}) > 0$ to be a solution of \eqref{eq:theta_equilibrium_identity}, we must have
    \begin{align*}
        1 & = \sum_{d \in \DD} \Big[\frac{dm_d}{d^{\mathtt{avg}}} (\betau^dz^d_{\IT}+\betap^d(1-z^d_{\IT})) \times
        \\ & \qquad \qquad \frac{(z^d_{\ST} + \alpha (1-z^d_{\ST})) d }{\gamma + (z^d_{\ST} + \alpha (1-z^d_{\ST})) d \Theta^\star(\mathbf{z})}\Big].
    \end{align*}
    Note that the R.H.S. is monotonically decreasing in $\Theta^\star(\mathbf{z})$.     
    At $\Theta^\star(\mathbf{z}) = 1$, we have
    \begin{align*}
        & \sum_{d \in \DD} \Big[\frac{dm_d}{d^{\mathtt{avg}}} (\betau^dz^d_{\IT}+\betap^d(1-z^d_{\IT}))  \frac{(z^d_{\ST} + \alpha (1-z^d_{\ST})) d }{\gamma + (z^d_{\ST} + \alpha (1-z^d_{\ST})) d}\Big] 
        \\ < & \sum_{d \in \DD} \Big[\frac{dm_d}{d^{\mathtt{avg}}} (\betau^dz^d_{\IT}+\betap^d(1-z^d_{\IT})) \Big] 
        \\ < & \sum_{d \in \DD} \frac{dm_d}{d^{\mathtt{avg}}} = 1.
    \end{align*}
    Therefore, $\Theta^\star(\mathbf{z}) = 1$ is not a solution of \eqref{eq:theta_equilibrium_identity}. 
    Therefore, a nonzero solution exists if and only if the R.H.S. exceeds $1$ at $\Theta^\star(\mathbf{z}) = 0$. In other words, 
    \begin{equation*}
        1 < \frac{1}{\gamma}\sum_{d \in \DD} \Big[\frac{d^2m_d}{d^{\mathtt{avg}}} (\betau^dz^d_{\IT}+\betap^d(1-z^d_{\IT}))  (z^d_{\ST} + \alpha (1-z^d_{\ST}))\Big].
    \end{equation*}
    This concludes the proof.
\end{proof}

In the following subsection, we analyze the equilibrium behavior when agents adopt protection in a game-theoretic manner.


\subsection{Game-Theoretic Protection Adoption}

We now define the payoff vector of the agents. A susceptible agent aims to balance the trade-off between the cost of adopting protection, denoted by the parameter $\cp > 0$, and the expected loss upon becoming infected. The expected loss is computed as the product of loss upon infection, captured by a parameter $L>0$, and the instantaneous probability of becoming infected. The latter quantity depends on the current social state $\mathbf{x}$, degree of the agent, and the action chosen by the agent. Formally, for a susceptible agent with degree $d$, we define its payoffs to be 
\begin{equation}\label{eq:payoff_susceptible}
    F^d_{\SUt}(\mathbf{x}) = - L d \Theta(\mathbf{x}), \qquad F^d_{\SPt}(\mathbf{x}) = - \cp - L \alpha d \Theta(\mathbf{x}),
\end{equation}
if the agent remains unprotected and adopts protection, respectively. In particular, when the agent adopts protection, it encounters an additional cost $\cp$ though its probability of becoming infected reduces due to the multiplying factor $\alpha \in (0,1)$. 

In contrast, an infected agent is already infected, and there is no immediate risk of becoming infected. Consequently, we define 
\begin{equation}\label{eq:payoff_infected}
    F^d_{\IUt}(\mathbf{x}) = - c_{\IUt}, \qquad F^d_{\IPt}(\mathbf{x}) = - c_{\IPt},
\end{equation}
to be the payoff of an infected agent if it remains unprotected and adopts protection, respectively. The parameter $c_{\IUt} > 0$ captures the penalty imposed on an infected agent if it does not adopt protection (or adhere to quarantine norms), while $c_{\IPt} > 0$ captures the inconvenience caused due to adopting protection while being sick. Note that the payoffs do not depend on the degree $d$ or social state $\mathbf{x}$. We further assume $c_{\IUt} > c_{\IPt}$, which indicates that infected agents prefer to adopt protection. 


We assume that agents revise their protection adoption strategies following the replicator dynamics~\cite{sandholm2010population}. We further assume that agents only replicate the strategies of other agents who have the same degree and infection status, i.e., susceptible individuals with degree $d$ only replicate the strategies of other susceptible individuals of the same degree $d$ (likewise for infected individuals). Consequently, the proportion of unprotected susceptible nodes of degree $d$ evolves as
\begin{align}
\dot{z}^d_{\mathtt{S}}(t) & =  z^d_{\mathtt{S}}(t)(1-z^d_{\mathtt{S}}(t)) \big[ F^d_{\SUt}(\mathbf{x}(t)) - F^d_{\SPt}(\mathbf{x}(t)) \big] \nonumber
\\ & = z^d_{\mathtt{S}}(t)(1-z^d_{\mathtt{S}}(t)) \big[ \cp - L(1-\alpha) d \Theta(\mathbf{x}(t)) \big]. \label{eq:tnse_zs_replicator}
\end{align}
Similarly, for infected individuals, we have
\begin{align}
\dot{z}^d_{\mathtt{I}}(t) & =  z^d_{\mathtt{I}}(t)(1-z^d_{\mathtt{I}}(t)) (c_{\IPt}-c_{\IUt}). \label{eq:tnse_zi_replicator}
\end{align}

Thus, equations \eqref{eq:sis_degree_d}, \eqref{eq:tnse_zs_replicator} and \eqref{eq:tnse_zi_replicator} characterize the coupled evolution of the epidemic and population states at the same time-scale. The above set of dynamics has a large number of equilibrium points due to the structure of the replicator dynamics which induces stationary points at both $0$ and $1$. In order to obtain further insights into the behavior of the above system, and characterize the equilibrium points, we analyze the coupled dynamics under timescale separation. Motivated by the past work \cite{satapathi2023coupled}, we focus on the case in which the replicator dynamics evolves faster than the epidemic dynamics. This is a reasonable assumption as agents are likely to adjust their behavior faster than the spread of the epidemic due to increased awareness derived from conventional as well as social media. The coupled dynamics can now be written as a slow-fast system given by
\begin{align}\label{eq:tnse_coupled-dynamics-timescale_slow_epi}
\begin{split}
\dot{y}^d(t) & = -\!\gamma y^d(t) \!+\! (1\!-y^d(t)) (z^d_{\ST}(t) \!+ \!\alpha (1\!-\!z^d_{\ST}(t))) d \Theta(\mathbf{x}(t)), \\
\epsilon \dot{z}^d_{\mathtt{S}}(t) & = z^d_{\mathtt{S}}(t)(1-z^d_{\mathtt{S}}(t)) \big[ \cp - L(1-\alpha) d \Theta(\mathbf{x}(t)) \big], \\
\epsilon \dot{z}^d_{\mathtt{I}}(t) & =  z^d_{\mathtt{I}}(t)(1-z^d_{\mathtt{I}}(t)) (c_{\IPt}-c_{\IUt}),
\end{split}
\end{align}
for all $d \in \DD$, and where $\epsilon \in (0,1)$ is a timescale separation variable \cite{berglund2006noise}. 


We first characterize the behavior of the agents (captured by \eqref{eq:tnse_zs_replicator} and \eqref{eq:tnse_zi_replicator}) for a given infection state $\mathbf{y} = \{y^d\}_{d \in \DD}$. It follows from \eqref{eq:payoff_infected} and our assumption $c_{\IUt} > c_{\IPt}$ that infected agents strictly prefer to adopt protection irrespective of the social state, and as a result, $z^d_{\IT} = 0$ is the unique stable equilibrium point of \eqref{eq:tnse_zi_replicator}. 

From the R.H.S. of \eqref{eq:def_theta}, it follows that when $z^d_{\IT} = 0$, $\Theta(\mathbf{x})$ does not depend on $z^d_{\ST}$ when $\mathbf{y}$ is specified. Therefore, with a slight abuse of notation, we define
\begin{equation}\label{eq:def_theta_2}
\Theta(\mathbf{y}) := \sum_{d \in \DD} \Big[\frac{dm_d}{d^{\mathtt{avg}}}  \betap^d y^d \Big].
\end{equation}

Now, for a susceptible agent, adopting protection is strictly preferred if and only if
\begin{align}
    F^d_{\SUt}(\mathbf{x}) < F^d_{\SPt}(\mathbf{x}) \iff & - L d \Theta(\mathbf{y}) < - \cp - L \alpha d \Theta(\mathbf{y}) \nonumber
    \\ \iff & \cp < L(1-\alpha)d \Theta(\mathbf{y}) \nonumber 
    \\ \iff & \Theta(\mathbf{y}) > \frac{\cp}{L(1-\alpha)d} =: \Theta^d_{th}. \label{eq:payoff_susceptible_difference}
\end{align}
In other words, the optimal strategy of a susceptible agent with degree $d$ depends on whether $\Theta(\mathbf{y})$ exceeds the degree-specific threshold $\Theta^d_{th}$ defined in \eqref{eq:payoff_susceptible_difference}. If $\Theta(\mathbf{y}) > \Theta^d_{th}$, then $z^d_{\ST} = 0$ is the stable equilibrium of \eqref{eq:tnse_zs_replicator}, while if $\Theta(\mathbf{y}) < \Theta^d_{th}$, then the $z^d_{\ST} = 1$ is the stable equilibrium of \eqref{eq:tnse_zs_replicator}. If $\Theta(\mathbf{y}) = \Theta^d_{th}$, then any $z^d_{\ST} \in [0,1]$ could emerge as the equilibrium of \eqref{eq:tnse_zs_replicator}. Thus, at a given $\mathbf{y}$, the replicator dynamics (which is the fast system since $\epsilon<1$) associated with both infected and susceptible agents has a unique stable equilibrium point except at a point with measure zero.  

We now state the reduced dynamics for the epidemic (which is the slow system) as 
\begin{subequations}\label{eq:sis_switched}
\begin{align}
\Theta(\mathbf{y}(t)) < \Theta^d_{th}: & \quad \dot{y}^d(t) = -\gamma y^d(t) + (1-y^d(t)) d \Theta(\mathbf{y}(t)), \label{eq:sis_lowy}
\\ \Theta(\mathbf{y}(t)) = \Theta^d_{th}: & \quad \dot{y}^d(t) \in \{  -\gamma y^d(t) + (1-y^d(t)) \times \nonumber
\\ & \quad (z^d_{\ST} + \alpha (1-z^d_{\ST})) d \Theta(\mathbf{y}(t)) \ \rvert \ z^d_{\ST} \in [0,1] \},  
\label{eq:sis_th1}
\\ \Theta(\mathbf{y}(t)) > \Theta^d_{th}: & \quad \dot{y}^d(t) = -\gamma y^d(t) + (1-y^d(t)) \alpha d \Theta(\mathbf{y}(t)). \label{eq:sis_highy}
\end{align}
\end{subequations}
The above dynamics approximates the coupled dynamics \eqref{eq:tnse_coupled-dynamics-timescale_slow_epi} as $\epsilon \to 0$, i.e., when individuals adopt protection in a strategic manner to maximize their payoffs as a function of the current epidemic state. Note that \eqref{eq:sis_switched} admits a Filippov solution \cite[Proposition 3]{cortes2008discontinuous} solution because the R.H.S. of \eqref{eq:sis_th1} is measurable and bounded. 

\begin{remark}
The dynamical system \eqref{eq:sis_switched} can be viewed as the dynamics of epidemic evolution when central authorities restrict interaction of nodes of a certain degree $d$ whenever $\Theta(\mathbf{y}(t))$ exceeds the threshold $\Theta^d_{th}$.  
\end{remark}

\section{Analysis of Equilibrium}

In this section, we characterize the existence and uniqueness of equilibrium of the dynamics \eqref{eq:sis_switched}. First observe that the thresholds $\Theta^d_{th}$, defined in \eqref{eq:payoff_susceptible_difference}, are monotonically decreasing in the degree $d$, i.e., agents with a larger degree switch to adopting protection for a smaller value of $\Theta(\mathbf{y}(t))$. Let $d_{\min}$ be the smallest degree for which $\Theta^{d_{\min}}_{th} < 1$. Then, we divide the region $[0,1]$ into $(d^{\max}-d_{\min}+2)$ number of intervals, denoted $\{\II_{d^{\max}+1},\II_{d^{\max}},\ldots,\II_{d_{\min}}\}$ such that
\begin{subequations}\label{eq:interval_II_def}
    \begin{align}
        & \II_{d^{\max}+1} := [0,\Theta^{d^{\max}}_{th}), 
        \\ & \II_{d_{\min}} := (\Theta^{d_{\min}}_{th},1], 
        \\ & \II_{d} := (\Theta^{d}_{th},\Theta^{d-1}_{th}), 
    \end{align}
\end{subequations}
for $d \in \{d^{\max},d^{\max}-1,\ldots,d_{\min}+1\}$.

We will now closely examine the dynamics \eqref{eq:sis_switched} when $\Theta(\mathbf{y}(t))$ belongs to one of the intervals as stated above. If $\Theta(\mathbf{y}(t)) \in \II_{d^\star}$, then $\Theta(\mathbf{y}(t)) > \Theta^d_{th}$ for all degree $d \geq d^\star$, and $\Theta(\mathbf{y}(t)) < \Theta^d_{th}$ for all degree $d < d^\star$. For this particular regime, the dynamics \eqref{eq:sis_switched} can be stated as:
\begin{subequations}\label{eq:sis_switched_2}
\begin{align}
d < d^\star: & \quad \dot{y}^d(t) = -\gamma y^d(t) + (1-y^d(t)) \times \nonumber 
\\ & \qquad \qquad \qquad d \sum_{d' \in \DD} \Big[\frac{d'm_{d'}}{d^{\mathtt{avg}}}  \betap^{d'} y^{d'}(t) \Big], \label{eq:sis2_lowy}
\\ d \geq d^\star: & \quad \dot{y}^d(t) = -\gamma y^d(t) + (1-y^d(t)) \times \nonumber 
\\ & \qquad \qquad \qquad \alpha d \sum_{d' \in \DD} \Big[\frac{d'm_{d'}}{d^{\mathtt{avg}}}  \betap^{d'} y^{d'}(t) \Big]. \label{eq:sis2_highy}
\end{align}
\end{subequations}

Before analyzing the equilibria of \eqref{eq:sis_switched}, we prove the following result on the equilibria of \eqref{eq:sis_switched_2}. We first define the following quantity: 
$$\RR(d^\star) := \sum^{d^\star-1}_{d=1} \frac{d^2m_d\betap^{d}}{d^{\mathtt{avg}} \gamma} + \sum^{d^{\max}}_{d = d^\star} \frac{\alpha d^2m_d\betap^{d}}{d^{\mathtt{avg}} \gamma}. $$ 

\begin{proposition}\label{proposition:tnse_reduced_dynamics}
Consider the dynamics \eqref{eq:sis_switched_2} for a specified $d^{\star} \in \{d_{\min},\ldots,d^{\max}+1\}$. We have the following characterization of its equilibria.
\begin{enumerate}
    \item If $\RR(d^{\star}) \leq 1$, then the disease-free equilibrium is the only equilibrium of \eqref{eq:sis_switched_2}. 
    \item If $\RR(d^{\star}) > 1$, then in addition to the disease-free equilibrium, there exists a unique nonzero endemic equilibrium of \eqref{eq:sis_switched_2}. 
\end{enumerate}
\end{proposition}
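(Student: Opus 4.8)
The plan is to reduce the fixed-point analysis of the switched system \eqref{eq:sis_switched_2} to a single scalar equation in the aggregate quantity $\Theta$, exactly as was done in the derivation of \eqref{eq:theta_equilibrium_identity}, and then to reuse the monotonicity argument from Lemma~\ref{lemma:dbmf_theta_nonzero}. Concretely, at an equilibrium of \eqref{eq:sis_switched_2} we set the right-hand sides to zero; writing $a^d := d$ for $d < d^\star$ and $a^d := \alpha d$ for $d \geq d^\star$ (the degree-dependent susceptibility factor in the present regime), the equilibrium condition for each $d$ reads $\gamma y^d = (1-y^d) a^d \Theta$, which yields $y^d = a^d \Theta/(\gamma + a^d \Theta)$, strictly increasing in $\Theta \geq 0$ and equal to $0$ only when $\Theta = 0$. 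Substituting these expressions into the definition $\Theta = \Theta(\mathbf{y}) = \sum_{d} \frac{dm_d}{d^{\mathtt{avg}}}\betap^d y^d$ from \eqref{eq:def_theta_2} gives the self-consistency equation
\begin{equation*}
\Theta\,\Big[\,1 - \sum_{d\in\DD}\frac{dm_d}{d^{\mathtt{avg}}}\betap^d\,\frac{a^d}{\gamma + a^d\Theta}\,\Big] = 0 .
\end{equation*}
So $\Theta=0$ (the disease-free equilibrium $\mathbf{y}=\mathbf{0}$) is always a solution, and any endemic equilibrium corresponds to a positive root of $g(\Theta) := \sum_{d}\frac{dm_d}{d^{\mathtt{avg}}}\betap^d\,\frac{a^d}{\gamma + a^d\Theta} = 1$.

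Next I would analyze $g$ on $[0,\infty)$: it is continuous, strictly decreasing (each summand is strictly decreasing in $\Theta$ since $a^d>0$, $\betap^d>0$, $m_d>0$), and tends to $0$ as $\Theta\to\infty$. Its value at $\Theta=0$ is $g(0) = \sum_{d}\frac{dm_d}{d^{\mathtt{avg}}}\betap^d\,\frac{a^d}{\gamma} = \frac{1}{\gamma}\big(\sum_{d<d^\star}\frac{d^2m_d\betap^d}{d^{\mathtt{avg}}} + \sum_{d\geq d^\star}\frac{\alpha d^2 m_d\betap^d}{d^{\mathtt{avg}}}\big) = \RR(d^\star)$. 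Hence if $\RR(d^\star)\leq 1$ then $g(\Theta) < 1$ for all $\Theta > 0$ (strict since $g$ is strictly decreasing), so the only solution is $\Theta=0$; whereas if $\RR(d^\star)>1$ then by the intermediate value theorem combined with strict monotonicity there is exactly one $\Theta^\star>0$ with $g(\Theta^\star)=1$. This $\Theta^\star$ in turn determines a unique vector $\mathbf{y}^\star$ via $y^d = a^d\Theta^\star/(\gamma + a^d\Theta^\star)$, and one checks this is a genuine equilibrium by verifying consistency — i.e. that $\sum_d \frac{dm_d}{d^{\mathtt{avg}}}\betap^d y^{d\star}$ indeed equals $\Theta^\star$, which holds by construction. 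For part~(1) I should also note that the map $\mathbf{y}\mapsto\Theta(\mathbf{y})$ is injective on the set of equilibria in the sense that distinct equilibria would give distinct $\Theta$ values (again because each $y^d$ is a strictly increasing function of $\Theta$ at equilibrium), so uniqueness of the positive $\Theta^\star$ gives uniqueness of the endemic equilibrium.

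One subtlety to address explicitly is feasibility: we need $y^{d\star}\in[0,1]$ for all $d$, which is automatic since $a^d\Theta^\star/(\gamma+a^d\Theta^\star)\in[0,1)$ for any $\Theta^\star\geq 0$, so no additional constraint is imposed. I would also remark that here $d^\star$ is treated as an exogenously fixed index (as the proposition states), so unlike the analysis of the original system \eqref{eq:sis_switched}, we do \emph{not} need to verify that the resulting $\Theta^\star$ actually lies in the interval $\II_{d^\star}$ — that compatibility check is precisely what the subsequent analysis of \eqref{eq:sis_switched} will handle, and it is not part of this proposition. The main obstacle, such as it is, is bookkeeping rather than depth: one must be careful that the strict inequalities $m_d>0$, $\betap^d\in(0,1)$, $\alpha\in(0,1)$, $\gamma\in(0,1)$ are invoked correctly to get strict monotonicity of $g$ (which rules out a continuum of equilibria and pins down uniqueness), and that the algebraic reduction from the vector equilibrium system to the scalar equation in $\Theta$ is reversible — every positive root of $g(\Theta)=1$ must correspond to a bona fide equilibrium of \eqref{eq:sis_switched_2}, not just the other direction.
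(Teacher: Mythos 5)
Your argument is correct, but it follows a genuinely different route from the paper. You reduce the equilibrium conditions of \eqref{eq:sis_switched_2} to the scalar self-consistency equation $g(\Theta)=1$ with $g(\Theta)=\sum_{d}\frac{dm_d}{d^{\mathtt{avg}}}\betap^d\frac{a^d}{\gamma+a^d\Theta}$, observe $g(0)=\RR(d^\star)$, and conclude by strict monotonicity and the intermediate value theorem — essentially the same computation the paper carries out for Lemma~\ref{lemma:dbmf_theta_nonzero}, specialized to the regime where $z^d_{\IT}=0$ and $z^d_{\ST}$ is the indicator of $d<d^\star$. The paper instead constructs a directed graph $\hat{\GG}$ on $d^{\max}$ nodes whose rank-one adjacency matrix $\hat{A}$ makes \eqref{eq:sis_switched_2} an instance of the NIMFA dynamics \eqref{eq:app_nimfa}, computes $\rho(D^{-1}\hat{A})=\mathbf{v_1}^\top\mathbf{v_2}=\RR(d^\star)$, and invokes Theorem~\ref{theorem:nimfa}. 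The trade-off is this: your proof is self-contained and elementary, and your care about the reversibility of the reduction (every positive root of $g(\Theta)=1$ yields a bona fide equilibrium, and distinct equilibria yield distinct $\Theta$ values since $y^d$ is pinned down by $\Theta$ at equilibrium) is exactly the right bookkeeping to make it airtight; but it only characterizes the equilibrium set. The paper's route additionally delivers global asymptotic stability of the relevant equilibrium for free from the cited NIMFA theorem — a conclusion not claimed in the proposition itself but consistent with the convergence observed numerically. For the statement as written, your proof is complete; the one cosmetic point worth adding is the observation that any positive root automatically satisfies $\Theta^\star=\sum_d\frac{dm_d}{d^{\mathtt{avg}}}\betap^d y^{d\star}<1$, so feasibility of $\Theta^\star$ as a probability is also automatic.
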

The proof is presented in Appendix \ref{section:tnse_appendix_proposition}, and leverages connection between the dynamics in \eqref{eq:sis_switched_2}, and the N-Intertwined mean-field approximation (NIMFA) of the networked SIS epidemic dynamics \cite{van2008virus,khanafer2016stability}.


When $\RR(d^\star) > 1$, we denote the endemic equilibrium with $\mathbf{y}_{\mathtt{EE}}(d^{\star})$. Following \eqref{eq:theta_equilibrium_identity}, the quantity $\Theta(\mathbf{y}_{\mathtt{EE}}(d^{\star}))$ at the endemic equilibrium is the unique value satisfying
\begin{align}
    1 & = \sum^{d^\star-1}_{d =1} \Big[\frac{dm_d}{d^{\mathtt{avg}}} \frac{d \betap^d}{\gamma + d \Theta(\mathbf{y}_{\mathtt{EE}}(d^{\star}))}\Big] \nonumber
    \\ & \qquad + \sum^{d^{\max}}_{d =d^\star} \Big[\frac{dm_d}{d^{\mathtt{avg}}} \frac{\alpha d \betap^d}{\gamma + \alpha d \Theta(\mathbf{y}_{\mathtt{EE}}(d^{\star}))}\Big]. \label{eq:theta_star_d_star}
\end{align}
If $\RR(d^\star) \leq 1$, we define $\Theta(\mathbf{y}_{\mathtt{EE}}(d^{\star}))=0$. The following lemma establishes monotonicity of $\RR(d^\star)$ and $\Theta(\mathbf{y}_{\mathtt{EE}}(d^{\star}))$.

\begin{lemma}\label{lemma:monotonicity_R_Theta_DBMF}
The quantities $\RR(d^\star)$ and $\Theta(\mathbf{y}_{\mathtt{EE}}(d^{\star}))$ are monotonically increasing in $d^\star$.
\end{lemma}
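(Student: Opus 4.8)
The plan is to treat the two claims separately, exploiting the explicit formula for $\RR(d^\star)$ and the defining equation \eqref{eq:theta_star_d_star} for $\Theta(\mathbf{y}_{\mathtt{EE}}(d^\star))$.

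First I would handle the monotonicity of $\RR(d^\star)$. Observe that increasing $d^\star$ by one moves exactly one term, corresponding to degree $d = d^\star$, from the second (weighted by $\alpha$) sum to the first (unweighted) sum. Hence
\[
\RR(d^\star+1) - \RR(d^\star) = \frac{(d^\star)^2 m_{d^\star} \betap^{d^\star}}{d^{\mathtt{avg}} \gamma}(1-\alpha) > 0,
\]
using $\alpha \in (0,1)$, $\betap^{d^\star} \in (0,1)$, $m_{d^\star} > 0$, and $\gamma \in (0,1)$. This immediately gives strict monotonicity of $\RR(d^\star)$ in $d^\star$, and therefore there is a well-defined threshold index beyond which $\RR(d^\star) > 1$.

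Next I would establish monotonicity of $\Theta(\mathbf{y}_{\mathtt{EE}}(d^\star))$. For indices with $\RR(d^\star) \le 1$ the value is defined to be $0$, and since $\RR$ is increasing, these are exactly the small indices; for the first index where $\RR(d^\star) > 1$ the endemic value is strictly positive (by Proposition~\ref{proposition:tnse_reduced_dynamics} and the fact that the right-hand side of \eqref{eq:theta_star_d_star} at $\Theta = 0$ equals $\RR(d^\star) > 1$ while it is decreasing in $\Theta$), so monotonicity across the boundary is clear. For two consecutive indices $d^\star$ and $d^\star+1$ both in the endemic regime, define $g_{d^\star}(\Theta)$ to be the right-hand side of \eqref{eq:theta_star_d_star}; it is continuous and strictly decreasing in $\Theta$ on $(0,1)$, and $\Theta(\mathbf{y}_{\mathtt{EE}}(d^\star))$ is the unique root of $g_{d^\star}(\Theta) = 1$. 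The key computation is that passing from $d^\star$ to $d^\star+1$ replaces the single summand $\frac{d m_d}{d^{\mathtt{avg}}}\frac{\alpha d \betap^d}{\gamma + \alpha d \Theta}$ (at $d = d^\star$) by $\frac{d m_d}{d^{\mathtt{avg}}}\frac{d \betap^d}{\gamma + d \Theta}$, and I would show that the latter is strictly larger for every $\Theta \ge 0$, i.e. $\frac{d}{\gamma + d\Theta} > \frac{\alpha d}{\gamma + \alpha d \Theta}$, which reduces after cross-multiplication to $\gamma > \alpha\gamma$, true since $\alpha < 1$. Hence $g_{d^\star+1}(\Theta) > g_{d^\star}(\Theta)$ pointwise, so the root of $g_{d^\star+1} = 1$ lies strictly to the right of the root of $g_{d^\star} = 1$ (because each $g$ is decreasing), giving $\Theta(\mathbf{y}_{\mathtt{EE}}(d^\star+1)) > \Theta(\mathbf{y}_{\mathtt{EE}}(d^\star))$.

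The only mildly delicate point is bookkeeping around the boundary index where $\RR$ crosses $1$: one must confirm that $g_{d^\star}(0) = \RR(d^\star) \gamma / \gamma$—more precisely that $g_{d^\star}(0)$ equals the sum defining $\RR(d^\star)$—so that $\RR(d^\star) > 1$ is exactly the condition for a positive root, consistent with Lemma~\ref{lemma:dbmf_theta_nonzero} and Proposition~\ref{proposition:tnse_reduced_dynamics}; and that $g_{d^\star}(\Theta) < 1$ at $\Theta = 1$ so the root lies in $(0,1)$, which follows from the same estimate used in the proof of Lemma~\ref{lemma:dbmf_theta_nonzero}. Once this is in place the pointwise inequality $g_{d^\star+1} > g_{d^\star}$ together with strict monotonicity of each $g$ in $\Theta$ does all the work, so I do not anticipate a serious obstacle—the argument is essentially a one-line comparison of summands plus a monotonic-root argument.
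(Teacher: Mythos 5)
Your proposal is correct and follows essentially the same argument as the paper: the paper's (much terser) proof likewise observes that raising $d^\star$ moves a term from the $\alpha$-weighted sum to the unweighted one, making $\RR$ larger, and that the right-hand side of \eqref{eq:theta_star_d_star} increases pointwise in $d^\star$ while being decreasing in $\Theta$, forcing the root to move right. Your version simply makes explicit the difference formula for $\RR$, the cross-multiplication check $\gamma > \alpha\gamma$, and the bookkeeping at the boundary where $\RR$ crosses $1$.
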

\begin{proof}
Note that as $d^\star$ increases, some entries move from the second summation to the first summation in the definition of $\RR(d^\star)$, and the terms in the first summation are larger because $\alpha \in (0,1)$. Similarly, if $d^\star$ increases and $\Theta(\mathbf{y}_{\mathtt{EE}})$ remains unchanged, then the R.H.S. of \eqref{eq:theta_star_d_star} increases. In order to achieve R.H.S. equal to $1$, $\Theta(\mathbf{y}_{\mathtt{EE}}(d^{\star}))$ needs to increase. 
\end{proof}


We are now ready to establish the existence and uniqueness of the equilibrium of the dynamics \eqref{eq:sis_switched} by leveraging the results established above.  

\begin{theorem}\label{theorem:tnse_main}
For the dynamics \eqref{eq:sis_switched}, we have the following characterization of its equilibria.
\begin{enumerate}[leftmargin=*]
    \item If $\RR(d^{\max}+1) \leq 1$, then the disease-free equilibrium is the only equilibrium of \eqref{eq:sis_switched}. 
    \item Now suppose $\RR(d^{\max}+1) > 1$. Let $\Theta^{d^{\max}+1}_{th}:=0$ for convenience. Let $d^{\mathtt{eq}} \in \{d_{\min},d_{\min}+1,\ldots,d^{\max},d^{\max}+1\}$ be the smallest degree for which $\Theta(\mathbf{y}_{\mathtt{EE}}(d^{\mathtt{eq}})) > \Theta^{d^\mathtt{eq}}_{th}$. Then, we have the following two cases.
    \begin{enumerate}
        \item If $\Theta(\mathbf{y}_{\mathtt{EE}}(d^{\mathtt{eq}})) \in \II_{d^{\mathtt{eq}}}$, then $\mathbf{y}_{\mathtt{EE}}(d^{\mathtt{eq}})$ is the unique endemic equilibrium of \eqref{eq:sis_switched}. 
        \item If $\Theta(\mathbf{y}_{\mathtt{EE}}(d^{\mathtt{eq}})) \geq \Theta^{d^\mathtt{eq}-1}_{th}$, then there exists a unique endemic equilibrium with  $\{y^d\}_{d \in \DD}$ satisfying \eqref{eq:yd_equilibrium} with $\Theta(\mathbf{y}) = \Theta^{d^{\mathtt{eq}}-1}_{th}$. 
    \end{enumerate}    
\end{enumerate}
\end{theorem}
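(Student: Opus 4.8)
The strategy is to show that every equilibrium of \eqref{eq:sis_switched} corresponds to exactly one self-consistent ``configuration'' of which degree classes adopt protection, and then to use the monotonicity results above to single it out. At an equilibrium $\mathbf{y}^\star$ set $\theta^\star:=\Theta(\mathbf{y}^\star)\in[0,1)$. Either $\theta^\star$ lies in the interior of some interval $\II_{d^\star}$, in which case --- $\Theta$ being continuous --- \eqref{eq:sis_switched} coincides with \eqref{eq:sis_switched_2} for that $d^\star$ on a neighbourhood of $\mathbf{y}^\star$, so $\mathbf{y}^\star$ is an equilibrium of \eqref{eq:sis_switched_2} and Proposition~\ref{proposition:tnse_reduced_dynamics} applies; or $\theta^\star$ equals one of the thresholds $\Theta^{d_0}_{th}$ with $d_0\in\{d_{\min},\dots,d^{\max}\}$ (it cannot equal $\Theta^{d}_{th}$ for $d<d_{\min}$, since those are $\ge1>\theta^\star$, nor two thresholds at once, since they are distinct). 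In the latter case degree $d_0$ is governed by the inclusion \eqref{eq:sis_th1}, and ``$\mathbf{y}^\star$ is a Filippov equilibrium'' means $0$ lies in the (convex) set-valued right-hand side there, which amounts to solving $\dot y^{d_0}=0$ for some $z^{d_0}_{\ST}\in[0,1]$ jointly with \eqref{eq:yd_equilibrium} for the other degrees. Throughout I abbreviate $\Theta_{\mathtt{EE}}(d^\star):=\Theta(\mathbf{y}_{\mathtt{EE}}(d^\star))$, which is nondecreasing in $d^\star$ by Lemma~\ref{lemma:monotonicity_R_Theta_DBMF}, equals $0$ precisely when $\RR(d^\star)\le1$, and for $\RR(d^\star)>1$ is the unique positive root of ``right-hand side of \eqref{eq:theta_star_d_star} $=1$'', that right-hand side being strictly decreasing in $\Theta$. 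The disease-free equilibrium always exists ($\theta^\star=0\in\II_{d^{\max}+1}$, all right-hand sides vanish), so it remains to count the endemic ones.

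\textit{Part 1.} If $\RR(d^{\max}+1)\le1$ then, since $\RR(\cdot)$ is nondecreasing, $\RR(d^\star)\le1$ for all $d^\star$, and Proposition~\ref{proposition:tnse_reduced_dynamics} rules out all interval-type endemic equilibria. A threshold-type equilibrium at $\Theta^{d_0}_{th}$ would require (by the analysis of Part~2 below) $\Theta^{d_0}_{th}\le\Theta_{\mathtt{EE}}(d_0+1)$; but $\RR(d_0+1)\le1$ gives $\Theta_{\mathtt{EE}}(d_0+1)=0<\Theta^{d_0}_{th}$, a contradiction. Hence $\mathbf{y}=\mathbf 0$ is the only equilibrium.

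\textit{Part 2, interval-type equilibria.} Now suppose $\RR(d^{\max}+1)>1$; then $\Theta_{\mathtt{EE}}(d^{\max}+1)>0=\Theta^{d^{\max}+1}_{th}$, so $d^{\mathtt{eq}}$ is well defined. By Proposition~\ref{proposition:tnse_reduced_dynamics}, a nonzero equilibrium with $\theta^\star\in\II_{d^\star}$ must equal $\mathbf{y}_{\mathtt{EE}}(d^\star)$ with $\RR(d^\star)>1$, and consistency means $\Theta_{\mathtt{EE}}(d^\star)\in\II_{d^\star}$. I claim this forces $d^\star=d^{\mathtt{eq}}$: for $d^\star<d^{\mathtt{eq}}$, minimality of $d^{\mathtt{eq}}$ gives $\Theta_{\mathtt{EE}}(d^\star)\le\Theta^{d^\star}_{th}$, which is the (excluded) left endpoint of $\II_{d^\star}$; for $d^\star>d^{\mathtt{eq}}$, monotonicity of $\Theta_{\mathtt{EE}}$ and of the thresholds yields $\Theta_{\mathtt{EE}}(d^\star)\ge\Theta_{\mathtt{EE}}(d^{\mathtt{eq}})>\Theta^{d^{\mathtt{eq}}}_{th}\ge\Theta^{d^\star-1}_{th}$, the (excluded) right endpoint of $\II_{d^\star}$. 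Moreover $\Theta_{\mathtt{EE}}(d^{\mathtt{eq}})>\Theta^{d^{\mathtt{eq}}}_{th}>0$ forces $\RR(d^{\mathtt{eq}})>1$, so $\mathbf{y}_{\mathtt{EE}}(d^{\mathtt{eq}})$ is a genuine endemic equilibrium of \eqref{eq:sis_switched_2}; it is consistent with \eqref{eq:sis_switched} exactly when $\Theta_{\mathtt{EE}}(d^{\mathtt{eq}})<\Theta^{d^{\mathtt{eq}}-1}_{th}$, i.e.\ case~(a), and is then the unique interval-type endemic equilibrium.

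\textit{Part 2, threshold-type equilibria.} Fix $d_0\in\{d_{\min},\dots,d^{\max}\}$ and $\theta:=\Theta^{d_0}_{th}$. At such an equilibrium every $d<d_0$ has $z^d_{\ST}=1$, every $d>d_0$ has $z^d_{\ST}=0$, and $\kappa:=z^{d_0}_{\ST}+\alpha(1-z^{d_0}_{\ST})\in[\alpha,1]$ is free; substituting \eqref{eq:yd_equilibrium} into \eqref{eq:def_theta_2} reduces consistency to $h(\kappa)=1$, where $h(\kappa)=\sum_{d<d_0}\frac{d^2m_d\betap^d}{d^{\mathtt{avg}}(\gamma+d\theta)}+\frac{\kappa d_0^2m_{d_0}\betap^{d_0}}{d^{\mathtt{avg}}(\gamma+\kappa d_0\theta)}+\sum_{d>d_0}\frac{\alpha d^2m_d\betap^d}{d^{\mathtt{avg}}(\gamma+\alpha d\theta)}$ is continuous and strictly increasing on $[\alpha,1]$ (the middle term has positive $\kappa$-derivative), with $h(\alpha)$ and $h(1)$ equal to the right-hand side of \eqref{eq:theta_star_d_star} at $\Theta=\theta$ for $d^\star=d_0$ and $d^\star=d_0+1$, respectively. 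Hence a root $\kappa^\star\in[\alpha,1]$ exists, and is unique, iff $h(\alpha)\le1\le h(1)$, which --- using that the right-hand side of \eqref{eq:theta_star_d_star} equals $1$ at $\Theta_{\mathtt{EE}}(d^\star)$ and is $\le1$ everywhere when $\RR(d^\star)\le1$ --- is equivalent to $\Theta_{\mathtt{EE}}(d_0)\le\theta\le\Theta_{\mathtt{EE}}(d_0+1)$. Arguing as in the interval case with the monotonicity of $\Theta_{\mathtt{EE}}$ and the thresholds, this pair of inequalities holds for exactly one $d_0$, namely $d_0=d^{\mathtt{eq}}-1$, and only in case~(b): the right inequality is precisely $\Theta^{d^{\mathtt{eq}}-1}_{th}\le\Theta_{\mathtt{EE}}(d^{\mathtt{eq}})$, while the left one follows from minimality of $d^{\mathtt{eq}}$. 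The unique $\kappa^\star$ then determines $\{y^d\}_{d\in\DD}$ via \eqref{eq:yd_equilibrium} with $\Theta(\mathbf{y})=\Theta^{d^{\mathtt{eq}}-1}_{th}$, giving the unique endemic equilibrium of case~(b), and in case~(a) no threshold-type equilibrium exists since the right inequality fails for every $d_0$; combining with the interval analysis proves the theorem. The conceptual step (equilibria biject with self-consistent configurations) is short; \emph{the main obstacle is the monotone bookkeeping} that the nondecreasing sequence $\Theta_{\mathtt{EE}}(d_{\min})\le\cdots\le\Theta_{\mathtt{EE}}(d^{\max}+1)$ meets the leftward-shifting family $\{\II_{d^\star}\}$ in exactly one place, together with the borderline cases where $\Theta_{\mathtt{EE}}(d^{\mathtt{eq}})$ hits a threshold exactly (the interval- and threshold-type descriptions then merge) and the check that such a threshold point is a bona fide Filippov equilibrium of \eqref{eq:sis_switched}.
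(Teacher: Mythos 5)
Your proof is correct and is built from the same ingredients as the paper's: Proposition~\ref{proposition:tnse_reduced_dynamics} pins down the unique candidate on each open interval $\II_{d^\star}$, Lemma~\ref{lemma:monotonicity_R_Theta_DBMF} does the bookkeeping that singles out $d^{\mathtt{eq}}$, and the threshold points are handled through the mixed-strategy consistency equation. Where you genuinely depart from the paper is in the treatment of threshold-type equilibria. The paper handles these by separate chains of inequalities: in Part~1 and in Case~(a) it derives a contradiction for each candidate $d'$ (bounding \eqref{eq:theorem_contradiction1} via monotonicity in $z^{d'}_{\ST}$ and $\alpha$), in Case~(b) it constructs $\bar{z}^{d^{\mathtt{eq}}-1}_{\ST}$ from the sandwich on $\bar{\RR}$, and it explicitly omits the Case~(b) uniqueness argument ``in the interest of space.'' You instead extract a single necessary-and-sufficient criterion: a threshold-type equilibrium at $\Theta^{d_0}_{th}$ exists (and is then unique in $\kappa$) iff $\Theta(\mathbf{y}_{\mathtt{EE}}(d_0)) \leq \Theta^{d_0}_{th} \leq \Theta(\mathbf{y}_{\mathtt{EE}}(d_0+1))$, obtained by observing that your $h(\alpha)$ and $h(1)$ are exactly the right-hand side of \eqref{eq:theta_star_d_star} at $d^\star = d_0$ and $d^\star = d_0+1$ and that this quantity is decreasing in $\Theta$. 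This buys you three things at once: the Part~1 non-existence, the Case~(a) non-existence, and the Case~(b) existence \emph{and} uniqueness all follow from one monotonicity statement about where the nondecreasing sequence $\Theta(\mathbf{y}_{\mathtt{EE}}(\cdot))$ crosses the decreasing thresholds. The equivalence itself is verified correctly in both the $\RR(d_0)>1$ and $\RR(d_0)\leq 1$ subcases, and your observation that $\Theta(\mathbf{y}^\star)<1$ rules out thresholds with $d<d_{\min}$ is a detail the paper leaves implicit. In short: same skeleton, but a tidier and more complete organization of the threshold analysis that actually closes the gap the paper leaves open.
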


The proof is presented in Appendix \ref{section:tnse_appendix_theorem_proof}, and it relies on the monotonicity properties established in Lemma \ref{lemma:monotonicity_R_Theta_DBMF} as well as Proposition \ref{proposition:tnse_reduced_dynamics}. Note that parameters such as $\cp$ and $L$ affect the outcome by influencing the thresholds $\Theta^d_{th}$, while $\alpha$ affects both the thresholds as well as $\Theta(\mathbf{y}_{\mathtt{EE}}(d))$. In Case 2(a) of the theorem, all susceptible agents with degree $d \geq d^{\mathtt{eq}}$ adopt protection and all susceptible agents with degree $d < d^{\mathtt{eq}}$ remain unprotected. In contrast, in Case 2(b), the proportion of susceptible agents with degree $d^{\mathtt{eq}}$ that adopt protection is strictly between $0$ and $1$. It is easy to see that the endemic equilibrium identified in Theorem \ref{theorem:tnse_main} together with the above protection adoption scheme constitutes an equilibrium of the coupled dynamics \eqref{eq:sis_degree_d}, \eqref{eq:tnse_zs_replicator} and \eqref{eq:tnse_zi_replicator}. 


\begin{figure*}[t]
\centering
  \includegraphics[width=0.33\linewidth]{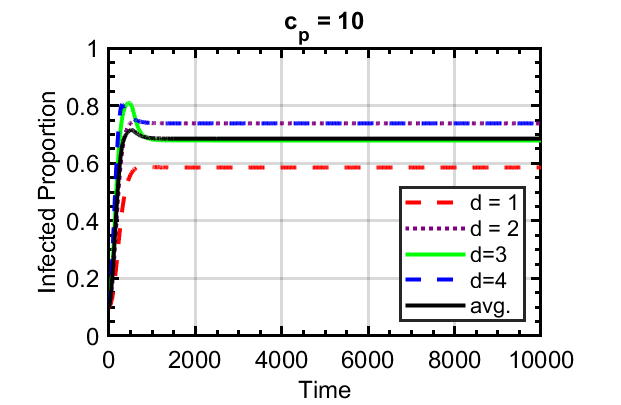}
  \includegraphics[width=0.33\linewidth]{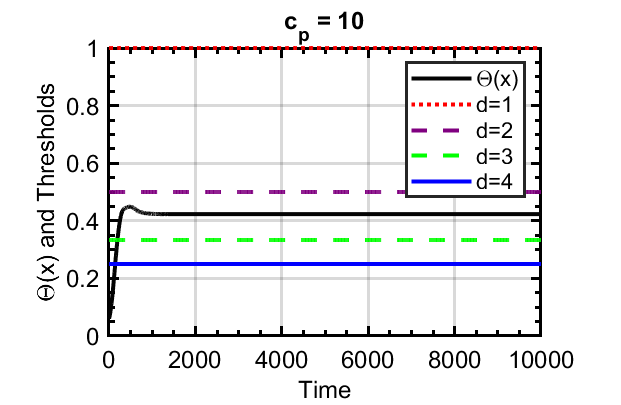}
  \includegraphics[width=0.33\linewidth]{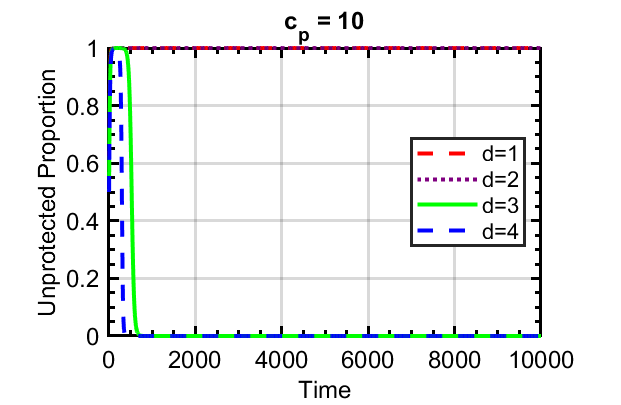}
  \\
  \includegraphics[width=0.33\linewidth]{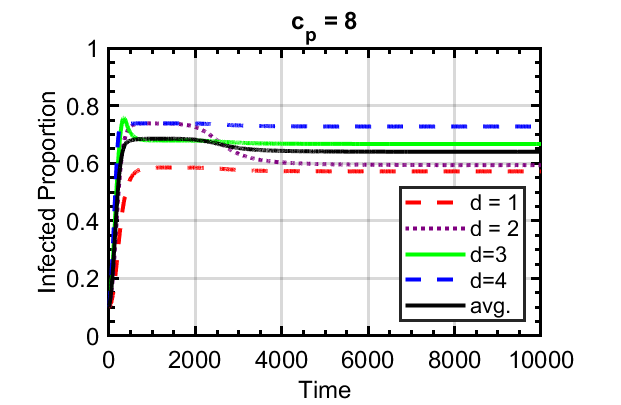}
  \includegraphics[width=0.33\linewidth]{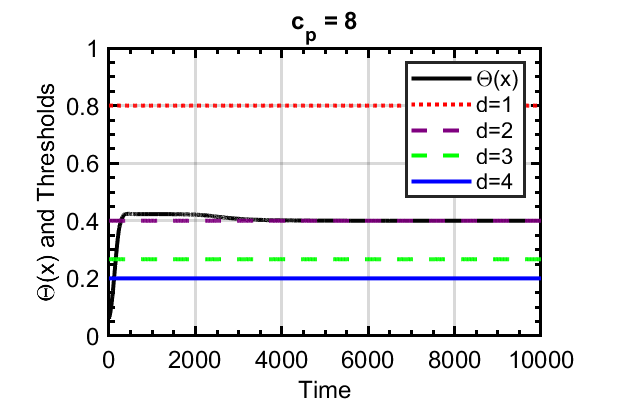}
  \includegraphics[width=0.33\linewidth]{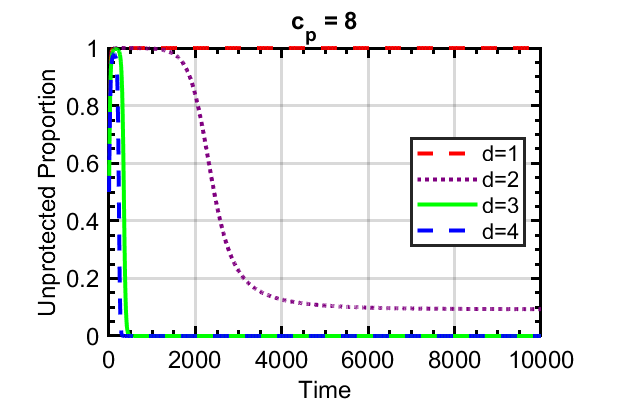}
  \caption{Evolution of infected proportion (left), $\Theta(\mathbf{x}(t)) = \Theta(\mathbf{y}(t))$ and degree-specific thresholds (middle), and proportion of unprotected susceptible agents (right) for $c_\mathtt{P} = 10$ (top row) and $c_\mathtt{P} = 8$ (bottom row).}
  \label{figure:tnse0}
\end{figure*}

\section{Numerical Results}

In this section, we numerically illustrate the convergence of the coupled dynamics and characteristics of the endemic equilibrium as a function of different parameters of the game, including cost of adopting protection, infection rate, effectiveness of protection and heterogeneous degree distributions.

\subsection{Convergence to Endemic Equilibrium}
\label{section:numerical_endemic} 

First we show that the coupled dynamics converges to the endemic equilibrium as established in Theorem \ref{theorem:tnse_main}. The values of different parameters used in this subsection are reported in Table \ref{tab:tnse0_parameter}. In particular, the values of $y^d(0)$ and $z^d_{\mathtt{S}}(0)$ denote the initial proportion of infected agents and unprotected susceptible agents. These initial conditions are used to compute the trajectories for all degrees using an Euler discretization of \eqref{eq:tnse_coupled-dynamics-timescale_slow_epi} with discretization parameter $0.01$ and $\epsilon=1$. The infection rates are identical for all degrees. We consider a network with the set of degrees $\DD = \{1,2,3,4\}$ with the proportion of nodes with each of the above degrees being $0.25$, i.e., $m_d = 0.25$ for all $d \in \DD$. The values of the thresholds $\Theta^d_{th}$ for two different values of $c_\mathtt{P}$ as well as the values $\Theta(\mathbf{y}_{\mathtt{EE}}(d))$ for $d \in \{2,3,4,5\}$ are reported in Table \ref{tab:tnse0_degree}.

\begin{table}[h]
\begin{center}
\begin{tabular}{|c | c | c | c | c | c | c | } 
 \hline
 $\alpha$ & $\beta^d_{\mathtt{P}}$ & $\beta^d_{\mathtt{U}}$ & $\gamma$ & $L$ & $y^d(0)$ & $z^d_{\mathtt{S}}(0)$ \\ [0.5ex] 
 \hline
 0.5 & 0.6 & 0.7 & 0.3 & 20 & 0.1 & 0.5 \\ \hline
\end{tabular}
 \caption{Values of different parameters used in simulations in Section \ref{section:numerical_endemic}.}
 \label{tab:tnse0_parameter}
\end{center}
 \end{table}

\begin{table}[h]
\begin{center}
\begin{tabular}{|c | c | c | c | c |} 
 \hline
 degree ($d$) & $m_d$ & $\Theta^d_{th}$ & $\Theta^d_{th}$ & $\Theta(\mathbf{y}_{\mathtt{EE}}(d+1))$ \\
 & & at $c_\mathtt{P}=10$ & at $c_\mathtt{P}=8$ & \\ 
 \hline
 1 & 0.25 & 1 & 0.8 & 0.3961 \\ \hline
 2 & 0.25 & 0.5 & 0.4 & 0.4231 \\ \hline
 3 & 0.25 & 0.33 & 0.2667 & 0.4543 \\ \hline
 4 & 0.25 & 0.25 & 0.2 & 0.4860 \\ \hline
\end{tabular}
 \caption{Degree distribution and thresholds used in simulations in Section \ref{section:numerical_endemic}.}
 \label{tab:tnse0_degree}
\end{center}
\end{table}

It follows from Table \ref{tab:tnse0_degree} that when $c_\mathtt{P}=10$, $d=3$ is the smallest degree for which $\Theta(\mathbf{y}_{\mathtt{EE}}(3)) = 0.4231 > 0.33 = \Theta^3_{th}$. Furthermore, $0.4231 \in \mathcal{I}_3 = (0.33,0.5)$. Consequently, the value of $\Theta(\mathbf{y})$ at the endemic equilibrium should be $0.4231$ following Theorem \ref{theorem:tnse_main}. This is precisely what we observe in the top row of Figure \ref{figure:tnse0}. 

The plots in the left panel of the top row of Figure \ref{figure:tnse0} show the evolution of infected proportion $y^d(t)$ for different degrees of the network as well as the expected fraction of infected nodes $y^{\mathtt{avg}}(t) = \sum_{d \in \DD} m_d y^d(t)$ shown in the thick black line. The infected proportions converge to the unique endemic equilibrium within $1500$ time steps. The plots in the middle panel show the values of thresholds $\Theta^d_{th}$ and $\Theta(\mathbf{y}(t))$, and indicate that $\Theta(\mathbf{y}(t))$ converges to the value $0.4231$ which lies in the interval $\mathcal{I}_3 = (0.33,0.5)$. The plots in the right panel show the evolution of the proportion unprotected susceptible agents $z^d_{\mathtt{S}}(t)$ for different degrees of the network. At the onset of the pandemic, when $\Theta(\mathbf{y}(t))$ was smaller than the thresholds for all the degrees, the quantity $z^d_{\mathtt{S}}(t)$ increased close to $1$ for all $d$. Eventually, as $\Theta(\mathbf{y}(t))$ exceeded the thresholds $\Theta^d_{th}$ for $d = 4$ and $d=3$, $z^4_{\mathtt{S}}(t)$ and $z^3_{\mathtt{S}}(t)$ started to decline and eventually converged to $0$ in accordance with the discussion in Section \ref{section:dbmf_game}.

We now examine the case where $c_\mathtt{P}=8$. According to Table \ref{tab:tnse0_degree}, $d=3$ is the smallest degree for which $\Theta(\mathbf{y}_{\mathtt{EE}}(3)) = 0.4231 > 0.2667 = \Theta^3_{th}$. However, in this case, $0.4231 \notin \mathcal{I}_3 = (0.2667,0.4)$. Consequently, the value of $\Theta(\mathbf{y})$ at the endemic equilibrium should be $0.4$ following Theorem \ref{theorem:tnse_main}. This is precisely what we observe in the bottom row of Figure \ref{figure:tnse0}. In particular, the plot in the middle panel shows that  $\Theta(\mathbf{y}(t))$ converges to $\Theta^2_{th}$. The plot in the right panel shows that $z^d_{\mathtt{S}}(t)$ converges to $0$ for $d = 3$ and $d=4$, while it converges to $1$ for $d=1$. However, $z^2_{\mathtt{S}}(t)$ converges to an intermediate value. Thus, the social state is shown to converge to the unique endemic equilibrium as postulated in Theorem \ref{theorem:tnse_main} in the both the cases. The convergence is slower in this case compared the case with $c_\mathtt{P}=10$. 

\subsection{Heterogeneous Degree Distribution and Infection Rates}
\label{section:tnse_casestudy1}

In the previous subsection, the infection rate $\beta^d_{\mathtt{P}}$ and the proportion $m_d$ were identical for all degrees. We now illustrate the impact of heterogeneity in these parameters. Let $\DD = \{1,2,3,4\}$ as before. We consider the following two cases.
\begin{itemize}
    \item Case 1: $\beta^d_{\mathtt{P}} = 0.1$ for $d \in \{1,2\}$, and $\beta^d_{\mathtt{P}} = 0.6$ for $d \in \{3,4\}$.
    \item Case 2: $\beta^d_{\mathtt{P}} = 0.6$ for $d \in \{1,2\}$, and $\beta^d_{\mathtt{P}} = 0.1$ for $d \in \{3,4\}$.
\end{itemize}
The values of other parameters are given in Table \ref{tab:tnse1_parameter}.

\begin{figure}
    \centering
    \includegraphics[width=\linewidth]{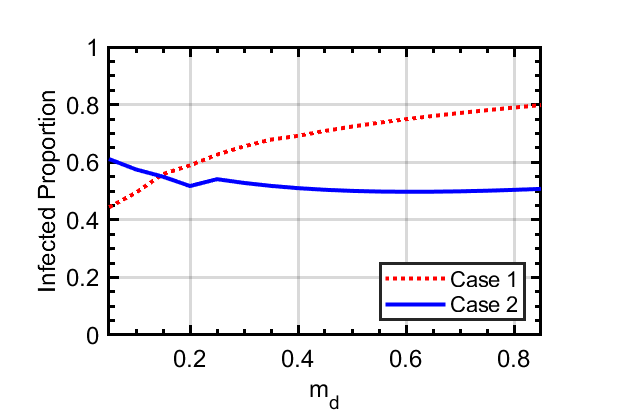}
    \caption{Variation of expected fraction of infected nodes ($y^{\mathtt{avg}}$) at the endemic equilibrium for different values of $m_d$ with $d=4$ and for two possible choice of infection rates stated in Section \ref{section:tnse_casestudy1}.}
    \label{fig:tnse_casestudy1}
\end{figure}

\begin{table}[h]
\begin{center}
\begin{tabular}{|c | c | c | c | c | c | c | } 
 \hline
 $c_{\mathtt{P}}$ & $\alpha$ & $\beta^d_{\mathtt{U}}$ & $\gamma$ & $L$ & $y^d(0)$ & $z^d_{\mathtt{S}}(0)$ \\ [0.5ex] 
 \hline
 15 & 0.5 & 0.6 & 0.2 & 15 & 0.1 & 0.5 \\ \hline
\end{tabular}
 \caption{Values of different parameters used in simulations in Section \ref{section:tnse_casestudy1}.}
 \label{tab:tnse1_parameter}
\end{center}
\end{table}

We induce heterogeneity in the degree distribution by assuming that $m_2 = m_3 = 0.05$, and by varying $m_4$ from $0.05$ to $0.85$. The value of $m_1$ is given by $1-m_2-m_3-m_4$. Figure \ref{fig:tnse_casestudy1} shows the variation of the expected fraction of infected nodes ($y^{\mathtt{avg}}$) at the endemic equilibrium for different values of $m_4$ and for both the cases of infection rates stated above. For the infection rates stated in Case 1, increase in $m_4$ leads to increase $y^{\mathtt{avg}}$ at the endemic equilibrium. This is expected since nodes with higher degree have a larger infection rate, and as the proportion of nodes with degree $d=4$ increases, the overall infection prevalence shows an increase. In contrast for Case 2, increase in $m_4$ leads to increase in the proportion of nodes with smaller $\beta^d_{\mathtt{P}}$, and decrease in the proportion of nodes with larger $\beta^d_{\mathtt{P}}$. As a result, the overall infection prevalence shows a decline, though the decrease in not monotonic. Thus, while intuition may suggest that a larger proportion of high degree nodes would lead to larger infection prevalence, this outcome is not always true when the infection rates are heterogeneous. Rather, an increase in the proportion of nodes with larger infection rates leads to a larger prevalence of the epidemic.  
 
\subsection{Comparison among Degree Distributions}
\label{section:tnse_casestudy_multiple}

In this subsection, we illustrate the impacts of effectiveness of protection $\alpha$, infection probability $\betap^d$ and cost of protection $\cp$ on the infection level at the endemic equilibrium. We let the set of degrees $\DD := \{1,2,\ldots,19,20\}$, and consider three different degree distributions given by
\begin{itemize}
    \item Uniform distribution with $m_d = 0.05$ for all $d \in \DD$,
    \item Binomial distribution with $m_d$ given by the Binomial probability mass function with $n=20$ and $p=0.525$, and
    \item Bimodal distribution with $m_d = 0.25$ for $d \in \{1,2,19,20\}$. 
\end{itemize}
For each of the above degree distributions, the average degree $d^{\mathtt{avg}} = 10.5$. The remaining parameters are set according to Table \ref{tab:tnse1_parameter_multiple}. 

\begin{table}[h]
\begin{center}
\begin{tabular}{|c | c | c | c | c | c | c | } 
 \hline
 $\beta^d_{\mathtt{U}}$ & $\gamma$ & $L$ & $y^d(0)$ & $z^d_{\mathtt{S}}(0)$ \\ [0.5ex] 
 \hline
 0.9 & 0.4 & 10 & 0.1 & 0.5 \\ \hline
\end{tabular}
 \caption{Values of different parameters used in simulations in Section \ref{section:tnse_casestudy_multiple}.}
 \label{tab:tnse1_parameter_multiple}
\end{center}
\end{table}

First we examine the impact of the effectiveness of protection, captured by the parameter $\alpha$. The plots in the left panel of Figure \ref{figure:tnse_degree} shows that as $\alpha$ increases, i.e., the protection becomes less effective, the expected fraction of infected nodes ($y^{\mathtt{avg}}$) as well as $\Theta(\mathbf{y})$ at the endemic equilibrium increase, before saturating for sufficiently large $\alpha$. The plot on the top row also shows that for the entire range of $\alpha$, $y^{\mathtt{avg}}$ under Binomial distribution is larger compared to the network with Uniform distribution followed by the Bimodal distribution. Thus, when the degree distribution of the network is heterogeneous, the expected fraction of infected nodes is smaller compared to a relatively homogeneous degree distribution. However, a similar observation does not hold for $\Theta(\mathbf{y})$. The figure on the bottom row shows that for smaller $\alpha$, $\Theta(\mathbf{y})$ is larger under the Binomial distribution, while for larger $\alpha$, $\Theta(\mathbf{y})$ is larger under the Bimodal distribution. 

Next, we examine the impact of the parameter $\betap^d$ which captures the probability with which an infected protected individual transmits infection to others. The plots in the middle panel of Figure \ref{figure:tnse_degree} shows that as $\betap^d$ increases, both $y^{\mathtt{avg}}$ as well as $\Theta(\mathbf{y})$ at the endemic equilibrium increase. Here also we observe that $y^{\mathtt{avg}}$ is larger under the Binomial distribution, followed by Uniform distribution and Bimodal distribution. However, the value of $\Theta(\mathbf{y})$ is approximately equal for all three degree distributions. 

\begin{figure*}[t]
\centering
  \includegraphics[width=0.33\linewidth]{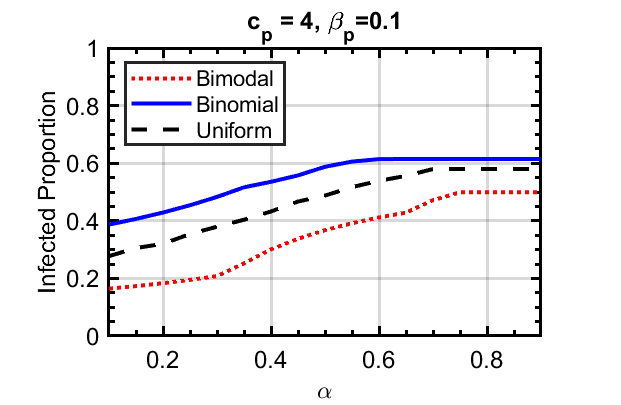}
  \includegraphics[width=0.33\linewidth]{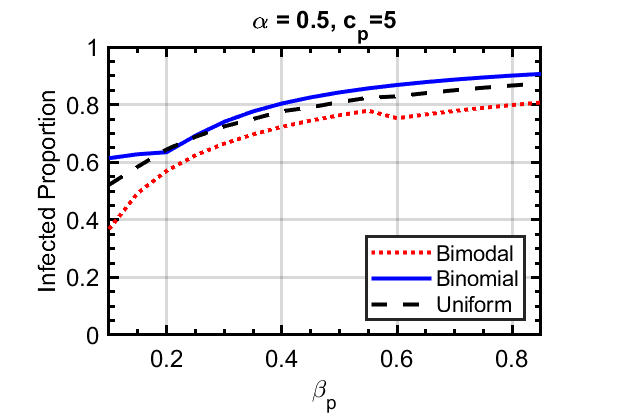}
  \includegraphics[width=0.33\linewidth]{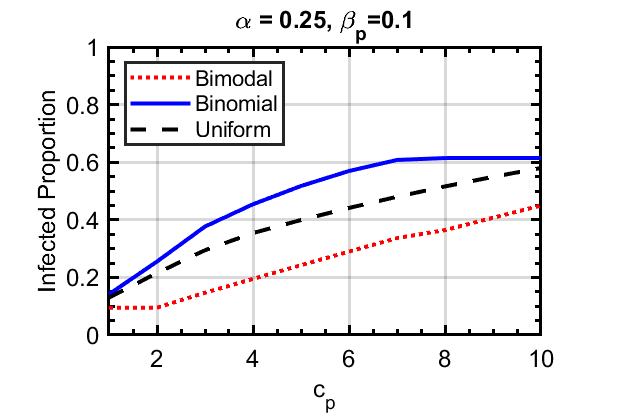}
  \\
  \includegraphics[width=0.33\linewidth]{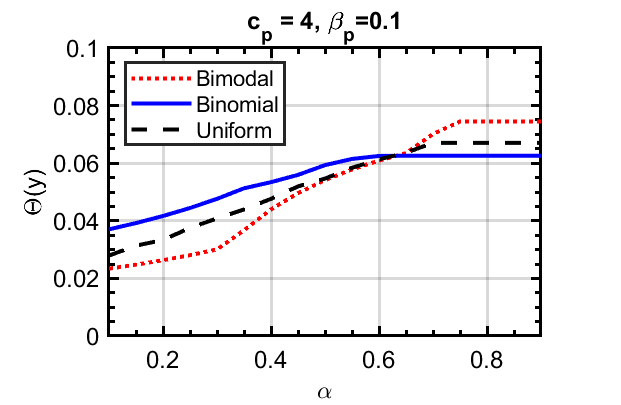}
  \includegraphics[width=0.33\linewidth]{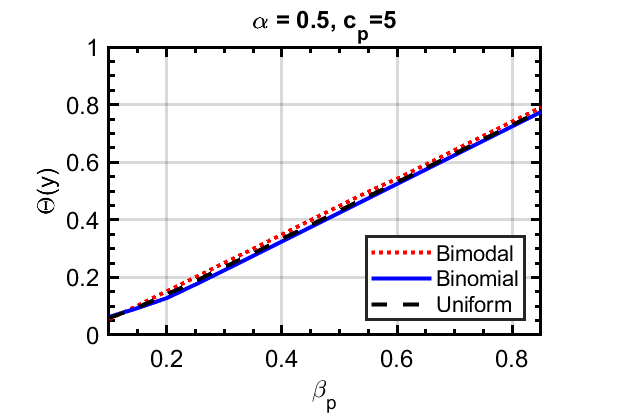}
  \includegraphics[width=0.33\linewidth]{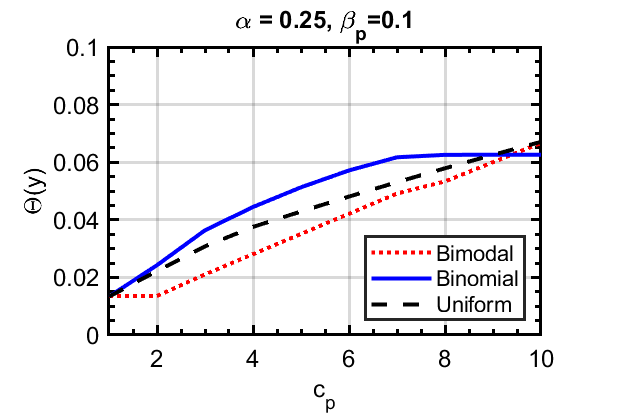}
  \caption{Expected fraction of infected agents (top row) and probability of becoming infected from a randomly chosen neighbor (bottom row) at the equilibrium as a function of effectiveness of protection (left panel), infection probability (middle panel) and cost of protection (right panel). We have assumed $\betap^d=\betap$ for all degrees in this example.}
  \label{figure:tnse_degree}
\end{figure*}

Finally, the plots in the right panel of Figure \ref{figure:tnse_degree} show the impact of the cost of protection adoption $\cp$. As $\cp$ increases, both $y^{\mathtt{avg}}$ as well as $\Theta(\mathbf{y})$ tend to increase. Furthermore, both $y^{\mathtt{avg}}$ as well as $\Theta(\mathbf{y})$ are larger under the Binomial distribution, followed by Uniform and Bimodal distributions, respectively. To summarize, the above numerical results yield the following insights.
\begin{itemize}
    \item When the proportion of nodes with a larger infection rate increases, $y^{\mathtt{avg}}$ at the endemic equilibrium tends to increase.
    \item Similarly, $y^{\mathtt{avg}}$ increases when protection becomes expensive (larger $\cp$) and less effective in both preventing (larger $\alpha$) as well as transmitting (larger $\betap^d$) infection. 
    \item When the degree distribution is nearly homogeneous (e.g., Binomial distribution), $y^{\mathtt{avg}}$ tends to be larger compared to when the degree distribution is largely heterogeneous (e.g., Bimodal and Uniform distributions). 
\end{itemize}

\section{Conclusion}
We analyzed the problem of strategic adoption of partially effective protection in large-scale networks in the population game framework. We derived the coupled epidemic-behavioral dynamics and relied on time-scale separation to derive the epidemic dynamics under optimal protection adoption strategies of the agents which depends on their degree. We then rigorously established the existence and uniqueness of stationary equilibrium of the above dynamics. We numerically illustrated the convergence of the dynamics to the equilibrium as well as the impacts of heterogeneous node degrees, infection rates and cost of protection adoption on the epidemic prevalence at the equilibrium. We aim to leverage the insights derived from this work to design intervention schemes which incentivizes protection adoption among users and reduce the prevalence of epidemics in follow up works. In addition, analyzing the protection adoption behavior of non-myopic or forward-looking agents for networked SIS as well as other classes of epidemic models remain as promising directions for future research. 


\appendices

\section{Omitted Proofs}

We first present an important result on the convergence and stability of the networked SIS epidemic under the N-intertwined mean-field approximation (NIMFA) followed by presenting the proofs omitted from the main text. 

\subsection{NIMFA of the SIS Epidemic Model}
\label{section:tnse_appendix_nimfa}

Consider a directed graph or network, denoted $\GG = (\VV,\EE)$ with $\VV$ being the set of nodes and $\EE$ being the set of directed edges. Let $|\VV|=n$, and $A \in \Rb^{n \times n}_{+}$ be the adjacency matrix of the graph. In particular, $a_{ij} = 0$ if and only if $(j,i) \notin \EE$. Let $p_i(t) \in [0,1]$ denote the probability of node $i$ being infected at time $t$. According to the NIMFA of the SIS epidemic \cite{van2008virus}, the infection probability evolves as
\begin{align}\label{eq:app_nimfa}
\frac{d p_i(t)}{dt} & = -\gamma p_i(t) + (1-p_i(t)) \sum^n_{j =1} a_{ij} p_j(t),
\end{align}
where $a_{ij} \geq 0$ denotes the probability of node $i$ becoming infected by node $j$, and $\gamma > 0$ denotes the rate with which an infected node recovers from the disease. The above dynamics can be written in vector form as
\begin{align}\label{eq:app_nimfa_linear}
\dot{p}(t) & = (A-D) p(t) - P(t) A p(t),
\end{align}
where $D = \mathtt{diag}(\gamma,\gamma,\ldots,\gamma)$ is the diagonal matrix of all recovery rates, and $P(t) = \mathtt{diag}(p(t))$. We now reproduce the following theorem from past works regarding the existence and stability of the equilibrium points of \eqref{eq:app_nimfa_linear}. 

\begin{theorem}[\cite{mei2017dynamics,khanafer2016stability}]\label{theorem:nimfa}
    Suppose the graph $\GG$ is strongly connected. Then, 
    \begin{enumerate}
        \item the disease-free equilibrium (DFE) with $p^\star_{\mathtt{DFE}} = 0_n$ is globally asymptotically stable (GAS) if and only if the spectral radius $\rho(D^{-1}A) \leq 1$, and
        \item a unique endemic equilibrium (EE) with $p^\star_{\mathtt{EE}} \gg 0_n$ exists if and only if $\rho(D^{-1}A) > 1$. If $p(0) \neq 0$ and $\rho(D^{-1}A) > 1$, then the endemic equilibrium is GAS. 
    \end{enumerate}
\end{theorem}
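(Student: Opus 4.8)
The plan is to treat \eqref{eq:app_nimfa_linear} as a cooperative (monotone) dynamical system on the unit hypercube and combine Perron--Frobenius theory for the linearization at the origin with monotone-systems / concave-operator arguments for the endemic regime. First I would check that $[0,1]^n$ is forward invariant: on the face $p_i=0$ one has $\dot p_i=\sum_j a_{ij}p_j\ge 0$, and on the face $p_i=1$ one has $\dot p_i=-\gamma<0$, so no trajectory leaves the cube. Strong connectivity makes $A$ irreducible, hence the Jacobian at the origin, $A-\gamma I$, is an irreducible Metzler matrix whose spectral abscissa equals $\rho(A)-\gamma=\gamma\bigl(\rho(D^{-1}A)-1\bigr)$ by Perron--Frobenius, since $D=\gamma I$ gives $\rho(D^{-1}A)=\rho(A)/\gamma$.

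For Part 1, if $\rho(D^{-1}A)>1$ the origin has an eigenvalue with positive real part, so the DFE is unstable and cannot be GAS. If $\rho(D^{-1}A)\le 1$, let $w\gg 0$ be the left Perron eigenvector of $A$ (so $w^\top A=\rho(A)\,w^\top$) and take the linear Lyapunov function $V(p)=w^\top p$. Along \eqref{eq:app_nimfa_linear},
\[
\dot V = w^\top (A-\gamma I) p - w^\top P A p = (\rho(A)-\gamma)\, w^\top p - w^\top P A p \le 0 ,
\]
because $p\ge 0$, $Ap\ge 0$, and $\rho(A)\le\gamma$. When $\rho(D^{-1}A)<1$ the first term is strictly negative away from the origin, giving GAS at once; when $\rho(D^{-1}A)=1$, $\dot V=0$ forces $p_i(Ap)_i=0$ for every $i$, and on the largest invariant set contained in this locus each active coordinate obeys $\dot p_i=-\gamma p_i$, which together with irreducibility of $A$ and forward invariance of $[0,1]^n$ forces $p\equiv 0$, so LaSalle's invariance principle yields GAS.

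For Part 2, the equilibria of \eqref{eq:app_nimfa_linear} are exactly the fixed points of the map $F:[0,1]^n\to[0,1]^n$ with $F_i(p)=(Ap)_i/(\gamma+(Ap)_i)$, which is order-preserving, componentwise concave, and satisfies $F(0)=0$ with $F'(0)=D^{-1}A$. Strict concavity along rays (sub-homogeneity) shows a nonzero fixed point can occur only when $\rho(F'(0))=\rho(D^{-1}A)>1$; in that case existence of $p^\star\gg 0$ follows by monotone iteration of $F$ upward from a small subsolution $\varepsilon v$ (with $v\gg 0$ the Perron eigenvector of $A$ and $\varepsilon$ small, using $\rho(A)>\gamma$) and downward from $\mathbf{1}$, and uniqueness follows from the standard concave-map comparison argument. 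For global attractivity when $p(0)\ne 0$: irreducibility makes the flow strongly monotone, so $p(t)\gg 0$ for $t>0$; sandwiching the trajectory between sub- and super-solutions that both converge to $p^\star$ (again by concavity/sub-homogeneity) forces $p(t)\to p^\star$.

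The hard part will be the global rather than local statements: the critical case $\rho(D^{-1}A)=1$ for the DFE, and especially the existence, uniqueness, and global attractivity of the endemic equilibrium, which genuinely require the monotone-dynamical-systems and concave-operator machinery (Hirsch--Smith, Krause) rather than elementary Lyapunov estimates. Since the statement is quoted verbatim from \cite{mei2017dynamics,khanafer2016stability}, in the paper I would simply cite those references for the complete argument, but the scheme above is the route I would follow to reconstruct it.
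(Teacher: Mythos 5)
The paper does not prove this statement at all: it is reproduced verbatim from \cite{mei2017dynamics,khanafer2016stability} and used as an imported result, so there is no in-paper proof to compare against, and your closing decision to simply cite those references is exactly what the authors do. Your sketch is nonetheless a correct reconstruction of the standard argument in those references --- forward invariance of $[0,1]^n$, the linear Lyapunov function $V(p)=w^\top p$ built from the left Perron eigenvector together with LaSalle for the critical case $\rho(D^{-1}A)=1$ (where one should note explicitly that $Ap=0$ with $p\ge 0$, $p\neq 0$ contradicts irreducibility, so the largest invariant set in $\{\dot V=0\}$ is the origin), and the order-preserving concave fixed-point map $F_i(p)=(Ap)_i/(\gamma+(Ap)_i)$ for existence, uniqueness, and global attractivity of the endemic equilibrium. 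No gaps beyond the ones you already flag as requiring the monotone-systems machinery.
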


\subsection{Proof of Proposition \ref{proposition:tnse_reduced_dynamics}}
\label{section:tnse_appendix_proposition}

\begin{proof}
For the proof, we exploit Theorem \ref{theorem:nimfa} after establishing the equivalence between the dynamics \eqref{eq:sis_switched_2} and the N-Intertwined Mean-Field Approximation (NIMFA) of the SIS epidemic model on a directed network \eqref{eq:app_nimfa}. 

To this end, construct a directed graph $\hat{\GG}$ with $d^{\max}$ nodes, i.e., each degree $d \in \DD$ is treated as a node of $\hat{\GG}$. We define the adjacency matrix $\hat{A}$ where the weight of the edge between two nodes $d$ and $d'$ is given by
\begin{align*}
    [\hat{A}]_{d,d'} := \begin{cases}
    & \frac{d}{d^{\mathtt{avg}}} (d'm_{d'} \betap^{d'}), \qquad \text{for} \quad d < d^\star,
    \\ & \frac{\alpha d}{d^{\mathtt{avg}}} (d'm_{d'} \betap^{d'}), \qquad \text{for} \quad d \geq d^\star.
\end{cases}
\end{align*}

It is now easy to see that the dynamics \eqref{eq:sis_switched_2} is equivalent to the NIMFA approximation of the SIS epidemic \eqref{eq:app_nimfa} on the network $\hat{\GG}$ with adjacency matrix $\hat{A}$. In addition, all entries of $\hat{A}$ are nonzero (due to our assumption that $m_d > 0$ for all $d \in \DD$), and hence, $\hat{\GG}$ is strongly connected. Furthermore, the matrix $D^{-1}\hat{A}$ has rank one since it is the outer product of two vectors given by $D^{-1}\hat{A} = \mathbf{v_1} \cdot \mathbf{v_2}^\top$, where
\begin{align}
     \mathbf{v_1} & = \frac{1}{d^{\mathtt{avg}} \gamma}\begin{bmatrix} 1 \\ 2 \\ \vdots \\ d^\star - 1 \\ \alpha d^\star \\ \vdots \\ \alpha d^{\max}\end{bmatrix}, \quad \mathbf{v_2} = \begin{bmatrix}
        m_{1} \betap^{1} \\ \vdots \\ d'm_{d'} \betap^{d'} \\ \vdots \\ d^{\max}m_{d^{\max}} \betap^{d^{\max}}
    \end{bmatrix}.
\end{align}
As a result, the spectral radius
\begin{align}
\rho(D^{-1}\hat{A}) & = \mathbf{v_1}^\top \mathbf{v_2} = \sum^{d^\star-1}_{d=1} \frac{d^2m_d\betap^{d}}{d^{\mathtt{avg}} \gamma} + \sum^{d^{\max}}_{d = d^\star} \frac{\alpha d^2m_d\betap^{d}}{d^{\mathtt{avg}} \gamma} \nonumber
\\ & =: \RR(d^\star).
\end{align}
The result now follows from Theorem \ref{theorem:nimfa}.
\end{proof}

\subsection{Proof of Theorem \ref{theorem:tnse_main}}
\label{section:tnse_appendix_theorem_proof}

\begin{proof}
{\bf Part 1: $\RR(d^{\max}+1) \leq 1$.} 

\noindent It follows from Lemma \ref{lemma:monotonicity_R_Theta_DBMF} that $\RR(d) < 1$ for all $d \in \{1,2,\ldots,d^{\max}\}$. Assume on the contrary that there exists a nonzero endemic equilibrium denoted $\mathbf{y}^\star_{\mathtt{EE}}$ with $\Theta(\mathbf{y}^{\star}_{\mathtt{EE}}) \in \II_{d'}$ for some $d' \in \{d_{\min},2,\ldots,d^{\max}+1\}$. However, since $\RR(d') < 1$, the disease-free equilibrium is the only equilibrium of the dynamics \eqref{eq:sis_switched_2} with $d^\star=d'$. Since the dynamics \eqref{eq:sis_switched} coincides with \eqref{eq:sis_switched_2} over this interval, there does not exist an equilibrium with $\Theta(\mathbf{y}^{\star}_{\mathtt{EE}}) \in \II_{d'}$ for either \eqref{eq:sis_switched_2} or \eqref{eq:sis_switched}. 


Now, suppose $\Theta(\mathbf{y}^{\star}_{\mathtt{EE}}) = \Theta^{d'}_{th}$ for some $d'\in\DD$. Then the strategy profile of susceptible agents needs to satisfy
\begin{equation*}
    z^d_{\ST} = \begin{cases}
    0, \qquad & \text{if} \qquad d > d' \quad \text{or} \qquad \Theta^{d}_{th} < \Theta^{d'}_{th},
    \\ z^{d'}_{\ST}, \qquad & \text{if} \qquad d=d'
    \\ 1, \qquad & \text{if} \qquad d < d' \quad \text{or} \qquad \Theta^{d}_{th} > \Theta^{d'}_{th},
    \end{cases}
\end{equation*}
for some $z^{d'}_{\ST} \in [0,1]$. It follows from \eqref{eq:theta_equilibrium_identity} that $\Theta^{d'}_{th}$ satisfies 
\begin{align}
    1 & = \sum_{d \in \DD} \Big[\frac{dm_d}{d^{\mathtt{avg}}} \betap^d \frac{(z^d_{\ST} + \alpha (1-z^d_{\ST})) d }{\gamma + (z^d_{\ST} + \alpha (1-z^d_{\ST})) d \Theta^{d'}_{th}}\Big] \nonumber
    \\ & = \sum^{d'-1}_{d=1} \Big[\frac{dm_d}{d^{\mathtt{avg}}} \frac{d \betap^d}{\gamma + d \Theta^{d'}_{th}}\Big] \nonumber
    \\ & \qquad + \Big[\frac{d'm_{d'}}{d^{\mathtt{avg}}} \frac{\betap^{d'}(z^{d'}_{\ST} + \alpha (1-z^{d'}_{\ST})) d' }{\gamma + (z^{d'}_{\ST} + \alpha (1-z^{d'}_{\ST})) d' \Theta^{d'}_{th}}\Big] \nonumber
    \\ & \qquad + \sum^{d^{\max}}_{d=d'+1} \Big[\frac{dm_d}{d^{\mathtt{avg}}} \frac{\alpha d \betap^d}{\gamma + \alpha d \Theta^{d'}_{th}}\Big] \label{eq:theorem_contradiction1}
    \\ & < \sum^{d^{\max}}_{d=1} \frac{d^2m_d\betap^d}{d^{\mathtt{avg}} (\gamma + \alpha d \Theta^{d'}_{th})} < \sum^{d^{\max}}_{d=1} \frac{d^2m_d\betap^d}{d^{\mathtt{avg}} \gamma} = \RR(d^{\max}+1). \nonumber
\end{align}
The inequality follows because the second term in the R.H.S. of the \eqref{eq:theorem_contradiction1} is monotonically increasing in $z^{d'}_{\ST}$, and the third term is monotonically increasing in $\alpha$. However, this is a contradiction since $\RR(d^{\max}+1) \leq 1$ in this regime. Thus, there does not exist an endemic equilibrium of \eqref{eq:sis_switched} with $\Theta(\mathbf{y}^\star_{\mathtt{EE}}) > 0$.


{\bf Part 2: $\RR(d^{\max}+1) > 1$.} 

\noindent Following the definition of $\Theta^d_{th}$ and Lemma \ref{lemma:monotonicity_R_Theta_DBMF}, we have
\begin{align*}
    & 0 = \Theta^{d^{\max}+1}_{th} < \Theta^{d^{\max}}_{th} < \Theta^{d^{\max}-1}_{th} < \ldots < \Theta^{d_{\min}}_{th} < 1 
    \\ & \qquad \qquad \leq \ldots \Theta^{1}_{th},
    \\ & \Theta(\mathbf{y}_{\mathtt{EE}}(1)) \leq \Theta(\mathbf{y}_{\mathtt{EE}}(2)) \leq \ldots \leq \Theta(\mathbf{y}_{\mathtt{EE}}(d^{\max})) 
    \\ & \qquad \qquad \leq \Theta(\mathbf{y}_{\mathtt{EE}}(d^{\max}+1)).
\end{align*}
From the definition of $d^{\mathtt{eq}}$, we have
\begin{align*}
\Theta(\mathbf{y}_{\mathtt{EE}}(d)) & > \Theta^{d}_{th}, \quad \text{for} \quad d \geq d^{\mathtt{eq}} \qquad \text{and} 
\\ \Theta(\mathbf{y}_{\mathtt{EE}}(d)) & \leq \Theta^{d}_{th} \quad \text{for} \quad d < d^{\mathtt{eq}}. 
\end{align*}
We tackle the two cases separately.

\noindent {\bf Case (a):} $\Theta(\mathbf{y}_{\mathtt{EE}}(d^{\mathtt{eq}})) \in (\Theta^{d^{\mathtt{eq}}}_{th},\min(1,\Theta^{d^{\mathtt{eq}}-1}_{th}))$. Note that the dynamics \eqref{eq:sis_switched} coincides with \eqref{eq:sis_switched_2} over the interval $\II_{d^{\mathtt{eq}}}$. Since $\Theta(\mathbf{y}_{\mathtt{EE}}(d^{\mathtt{eq}}))>0$, it is necessarily the case that $\RR(d^{\mathtt{eq}}) > 1$. Consequently, the dynamics \eqref{eq:sis_switched_2} with $d^\star = d^{\mathtt{eq}}$ has a unique nonzero endemic equilibrium at which $\Theta(\mathbf{y}_{\mathtt{EE}}(d^{\mathtt{eq}})) \in \II_{d^{\mathtt{eq}}}$. Therefore, $\mathbf{y}_{\mathtt{EE}}(d^{\mathtt{eq}})$ is a nonzero endemic equilibrium of \eqref{eq:sis_switched}. 

It remains to show that there does not exist any other nonzero endemic equilibrium of \eqref{eq:sis_switched}. Suppose there exists another nonzero endemic equilibrium $\mathbf{y}_{\mathtt{EE},2}$ with $\Theta(\mathbf{y}_{\mathtt{EE},2}) \in \II_{d'}$ for some $d' \neq d^{\mathtt{eq}}$. We examine the following two possibilities.
\begin{itemize}
    \item Suppose $d' > d^{\mathtt{eq}}$. The dynamics \eqref{eq:sis_switched_2} with $d^\star = d' > d^{\mathtt{eq}}$ has a unique nonzero endemic equilibrium, and following Lemma \ref{lemma:monotonicity_R_Theta_DBMF}, we have $\Theta(\mathbf{y}_{\mathtt{EE}}(d')) > \Theta(\mathbf{y}_{\mathtt{EE}}(d^{\mathtt{eq}})) > \Theta^{d^{\mathtt{eq}}}_{th}$. However, $\Theta(\mathbf{y}_{\mathtt{EE}}(d')) \notin \II_{d'}$ because $\II_{d'} = (\Theta^{d'}_{th},\Theta^{d'-1}_{th})$, and $\Theta^{d'-1}_{th} \leq \Theta^{d^{\mathtt{eq}}}_{th}$. As a result, $\mathbf{y}_{\mathtt{EE}}(d')$ is not an endemic equilibrium for \eqref{eq:sis_switched}.
    \item Now, let $d' < d^{\mathtt{eq}}$. The dynamics \eqref{eq:sis_switched_2} with $d^\star = d' < d^{\mathtt{eq}}$ has a unique nonzero endemic equilibrium, and following the definition of $d^{\mathtt{eq}}$, we have $\Theta(\mathbf{y}_{\mathtt{EE}}(d')) < \Theta^{d'}_{th}$. As a result, $\Theta(\mathbf{y}_{\mathtt{EE}}(d')) \notin \II_{d'} = (\Theta^{d'}_{th},\Theta^{d'-1}_{th})$. Thus, $\mathbf{y}_{\mathtt{EE}}(d')$ is not an endemic equilibrium for \eqref{eq:sis_switched}.
\end{itemize}

Now, suppose there exists another nonzero endemic equilibrium $\mathbf{y}_{\mathtt{EE},2}$ such that $\Theta(\mathbf{y}_{\mathtt{EE},2}) = \Theta^{d'}_{th}$ for some $d'$. Then, $\Theta^{d'}_{th}$ satisfies \eqref{eq:theorem_contradiction1}. Let $d' \geq d^{\mathtt{eq}}$. Since $\Theta(\mathbf{y}_{\mathtt{EE}}(d^{\mathtt{eq}})) \in (\Theta^{d^{\mathtt{eq}}}_{th},\min(1,\Theta^{d^{\mathtt{eq}}-1}_{th}))$, we have $\Theta^{d'}_{th} \leq \Theta^{d^{\mathtt{eq}}}_{th} < \Theta(\mathbf{y}_{\mathtt{EE}}(d^{\mathtt{eq}}))$. Setting $d^\star = d^{\mathtt{eq}}$ in \eqref{eq:theta_star_d_star}, we obtain
\begin{align*}
    1 & = \sum^{d^{\mathtt{eq}}-1}_{d =1} \Big[\frac{dm_d}{d^{\mathtt{avg}}} \frac{d \betap^d}{\gamma + d \Theta(\mathbf{y}_{\mathtt{EE}}(d^{\mathtt{eq}}))}\Big] 
    \\ & \qquad + \sum^{d^{\max}}_{d =d^{\mathtt{eq}}} \Big[\frac{dm_d}{d^{\mathtt{avg}}} \frac{\alpha d \betap^d}{\gamma + \alpha d \Theta(\mathbf{y}_{\mathtt{EE}}(d^{\mathtt{eq}}))}\Big]
    \\ & \leq \sum^{d^{\mathtt{eq}}-1}_{d =1} \Big[\frac{dm_d}{d^{\mathtt{avg}}} \frac{d \betap^d}{\gamma + d \Theta^{d'}_{th}}\Big] + \sum^{d^{\max}}_{d =d^{\mathtt{eq}}} \Big[\frac{dm_d}{d^{\mathtt{avg}}} \frac{\alpha d \betap^d}{\gamma + \alpha d \Theta^{d'}_{th}}\Big]
    \\ & = \sum^{d^{\mathtt{eq}}-1}_{d =1} \Big[\frac{dm_d}{d^{\mathtt{avg}}} \frac{d \betap^d}{\gamma + d \Theta^{d'}_{th}}\Big] + \! \! \sum^{d'-1}_{d =d^{\mathtt{eq}}} \Big[\frac{dm_d}{d^{\mathtt{avg}}} \frac{\alpha d \betap^d}{\gamma + \alpha d \Theta^{d'}_{th}}\Big] 
    \\ & \qquad +  \frac{d'm_{d'}}{d^{\mathtt{avg}}} \frac{\alpha d' \betap^{d'}}{\gamma + \alpha d' \Theta^{d'}_{th}} + \! \! \sum^{d^{\max}}_{d =d'+1} \Big[\frac{dm_d}{d^{\mathtt{avg}}} \frac{\alpha d \betap^d}{\gamma + \alpha d \Theta^{d'}_{th}}\Big]
    \\ & < \sum^{d^{'}-1}_{d =1} \Big[\frac{dm_d}{d^{\mathtt{avg}}} \frac{d \betap^d}{\gamma + d \Theta^{d'}_{th}}\Big] 
    \\ & \qquad + \Big[\frac{d'm_{d'}}{d^{\mathtt{avg}}} \betap^{d'}  \frac{(z^{d'}_{\ST} + \alpha (1-z^{d'}_{\ST})) d' }{\gamma + (z^{d'}_{\ST} + \alpha (1-z^{d'}_{\ST})) d' \Theta^{d'}_{th}}\Big] 
    \\ & \qquad + \sum^{d^{\max}}_{d =d'+1} \Big[\frac{dm_d}{d^{\mathtt{avg}}} \frac{\alpha d \betap^d}{\gamma + \alpha d \Theta^{d'}_{th}}\Big]
\end{align*}
for any $z^{d'}_{\ST} \in [0,1]$ due to the monotonicity of the second term in $z^{d'}_{\ST}$ and $\alpha$. However, this is in contradiction to \eqref{eq:theorem_contradiction1} which requires the R.H.S. to be equal to $1$. 

Now suppose $d' < d^{\mathtt{eq}}$. Since $\Theta(\mathbf{y}_{\mathtt{EE}}(d^{\mathtt{eq}})) \in (\Theta^{d^{\mathtt{eq}}}_{th},\min(1,\Theta^{d^{\mathtt{eq}}-1}_{th}))$, we have $\Theta(\mathbf{y}_{\mathtt{EE}}(d^{\mathtt{eq}})) < \Theta^{d'}_{th}$. Proceeding as before, we obtain
\begin{align*}
    1 & = \sum^{d^{\mathtt{eq}}-1}_{d =1} \Big[\frac{dm_d}{d^{\mathtt{avg}}} \frac{d \betap^d}{\gamma + d \Theta(\mathbf{y}_{\mathtt{EE}}(d^{\mathtt{eq}}))}\Big] 
    \\ & \qquad + \sum^{d^{\max}}_{d =d^{\mathtt{eq}}} \Big[\frac{dm_d}{d^{\mathtt{avg}}} \frac{\alpha d \betap^d}{\gamma + \alpha d \Theta(\mathbf{y}_{\mathtt{EE}}(d^{\mathtt{eq}}))}\Big]
    \\ & > \sum^{d^{\mathtt{eq}}-1}_{d =1} \Big[\frac{dm_d}{d^{\mathtt{avg}}} \frac{d \betap^d}{\gamma + d \Theta^{d'}_{th}}\Big] + \sum^{d^{\max}}_{d =d^{\mathtt{eq}}} \Big[\frac{dm_d}{d^{\mathtt{avg}}} \frac{\alpha d \betap^d}{\gamma + \alpha d \Theta^{d'}_{th}}\Big]
    \\ & = \sum^{d^{'}-1}_{d =1} \Big[\frac{dm_d}{d^{\mathtt{avg}}} \frac{d \betap^d}{\gamma + d \Theta^{d'}_{th}}\Big] + \frac{d'm_{d'}}{d^{\mathtt{avg}}} \frac{d' \betap^{d'}}{\gamma + d' \Theta^{d'}_{th}} 
    \\ & \qquad + \sum^{d^{\mathtt{eq}}-1}_{d =d'+1} \Big[\frac{dm_d}{d^{\mathtt{avg}}} \frac{d \betap^d}{\gamma + d \Theta^{d'}_{th}}\Big] 
    \\ & \qquad + \sum^{d^{\max}}_{d =d^{\mathtt{eq}}} \Big[\frac{dm_d}{d^{\mathtt{avg}}} \frac{\alpha d \betap^d}{\gamma + \alpha d \Theta^{d'}_{th}}\Big]
    \\ & \geq \sum^{d^{'}-1}_{d =1} \Big[\frac{dm_d}{d^{\mathtt{avg}}} \frac{d \betap^d}{\gamma + d \Theta^{d'}_{th}}\Big] + \frac{d'm_{d'}}{d^{\mathtt{avg}}} \frac{d' \betap^{d'}}{\gamma + d' \Theta^{d'}_{th}} 
    \\ & \qquad + \sum^{d^{\max}}_{d =d^{'}+1} \Big[\frac{dm_d}{d^{\mathtt{avg}}} \frac{\alpha d \betap^d}{\gamma + \alpha d \Theta^{d'}_{th}}\Big]
    \\ & \geq \sum^{d^{'}-1}_{d =1} \Big[\frac{dm_d}{d^{\mathtt{avg}}} \frac{d \betap^d}{\gamma + d \Theta^{d'}_{th}}\Big] 
    \\ & \qquad + \Big[\frac{d'm_{d'}}{d^{\mathtt{avg}}} \betap^{d'}  \frac{(z^{d'}_{\ST} + \alpha (1-z^{d'}_{\ST})) d' }{\gamma + (z^{d'}_{\ST} + \alpha (1-z^{d'}_{\ST})) d' \Theta^{d'}_{th}}\Big] 
    \\ & \qquad + \sum^{d^{\max}}_{d =d'+1} \Big[\frac{dm_d}{d^{\mathtt{avg}}} \frac{\alpha d \betap^d}{\gamma + \alpha d \Theta^{d'}_{th}}\Big]
\end{align*}
for any $z^{d'}_{\ST} \in [0,1]$. As before, this is in contradiction to \eqref{eq:theorem_contradiction1}. Therefore, $\mathbf{y}_{\mathtt{EE}}(d^{\mathtt{eq}})$ is the unique endemic equilibrium of \eqref{eq:sis_switched} in this regime. 


\noindent {\bf Case (b):} $\Theta(\mathbf{y}_{\mathtt{EE}}(d^{\mathtt{eq}})) \geq \Theta^{d^{\mathtt{eq}}-1}_{th}$. It follows from the definition of $d^{\mathtt{eq}}$ that $\Theta(\mathbf{y}_{\mathtt{EE}}(d^{\mathtt{eq}}-1)) \leq \Theta^{d^{\mathtt{eq}}-1}_{th}$. Note further than both $\Theta(\mathbf{y}_{\mathtt{EE}}(d^{\mathtt{eq}}))$ and $\Theta(\mathbf{y}_{\mathtt{EE}}(d^{\mathtt{eq}}-1))$ can not simultaneously be equal to $\Theta^{d^{\mathtt{eq}}-1}_{th}$ due to the strict monotonicity established in Lemma \ref{lemma:monotonicity_R_Theta_DBMF}. By setting $d^\star = d^{\mathtt{eq}}$ and $d^\star = d^{\mathtt{eq}}-1$ in \eqref{eq:theta_star_d_star}, we respectively obtain
\begin{align*}
1 & = \sum^{d^{\mathtt{eq}}-2}_{d =1} \Big[\frac{dm_d}{d^{\mathtt{avg}}} \frac{d \betap^d}{\gamma + d \Theta(\mathbf{y}_{\mathtt{EE}}(d^{\mathtt{eq}}))}\Big] 
\\ & \qquad + \sum^{d^{\max}}_{d =d^{\mathtt{eq}}} \Big[\frac{dm_d}{d^{\mathtt{avg}}} \frac{\alpha d \betap^d}{\gamma + \alpha d \Theta(\mathbf{y}_{\mathtt{EE}}(d^{\mathtt{eq}}))}\Big] 
\\ & \qquad + \frac{(d^{\mathtt{eq}}-1)m_{d^{\mathtt{eq}}-1}}{d^{\mathtt{avg}}} \frac{(d^{\mathtt{eq}}-1) \betap^{d^{\mathtt{eq}}-1}}{\gamma + (d^{\mathtt{eq}}-1) \Theta(\mathbf{y}_{\mathtt{EE}}(d^{\mathtt{eq}}))}, 
\\ & \leq \sum^{d^{\mathtt{eq}}-2}_{d =1} \Big[\frac{dm_d}{d^{\mathtt{avg}}} \frac{d \betap^d}{\gamma + d \Theta^{d^{\mathtt{eq}}-1}_{th}}\Big] + \sum^{d^{\max}}_{d =d^{\mathtt{eq}}} \Big[\frac{dm_d}{d^{\mathtt{avg}}} \frac{\alpha d \betap^d}{\gamma + \alpha d \Theta^{d^{\mathtt{eq}}-1}_{th}}\Big] 
\\ & \qquad + \frac{(d^{\mathtt{eq}}-1)m_{d^{\mathtt{eq}}-1}}{d^{\mathtt{avg}}} \frac{(d^{\mathtt{eq}}-1) \betap^{d^{\mathtt{eq}}-1}}{\gamma + (d^{\mathtt{eq}}-1) \Theta^{d^{\mathtt{eq}}-1}_{th}}, 
\\ 1 & = \sum^{d^{\mathtt{eq}}-2}_{d =1} \Big[\frac{dm_d}{d^{\mathtt{avg}}} \frac{d \betap^d}{\gamma + d \Theta(\mathbf{y}_{\mathtt{EE}}(d^{\mathtt{eq}}-1))}\Big] 
\\ & \qquad + \sum^{d^{\max}}_{d =d^{\mathtt{eq}}} \Big[\frac{dm_d}{d^{\mathtt{avg}}} \frac{\alpha d \betap^d}{\gamma + \alpha d \Theta(\mathbf{y}_{\mathtt{EE}}(d^{\mathtt{eq}}-1))}\Big] 
\\ & \qquad + \frac{(d^{\mathtt{eq}}-1)m_{d^{\mathtt{eq}}-1}}{d^{\mathtt{avg}}} \frac{\alpha (d^{\mathtt{eq}}-1) \betap^{d^{\mathtt{eq}}-1}}{\gamma + \alpha (d^{\mathtt{eq}}-1) \Theta(\mathbf{y}_{\mathtt{EE}}(d^{\mathtt{eq}}-1))}
\\ & \geq \sum^{d^{\mathtt{eq}}-2}_{d =1} \Big[\frac{dm_d}{d^{\mathtt{avg}}} \frac{d \betap^d}{\gamma + d \Theta^{d^{\mathtt{eq}}-1}_{th}}\Big] + \sum^{d^{\max}}_{d =d^{\mathtt{eq}}} \Big[\frac{dm_d}{d^{\mathtt{avg}}} \frac{\alpha d \betap^d}{\gamma + \alpha d \Theta^{d^{\mathtt{eq}}-1}_{th}}\Big] 
\\ & \qquad + \frac{ (d^{\mathtt{eq}}-1)m_{d^{\mathtt{eq}}-1}}{d^{\mathtt{avg}}} \frac{\alpha (d^{\mathtt{eq}}-1) \betap^{d^{\mathtt{eq}}-1}}{\gamma + \alpha (d^{\mathtt{eq}}-1) \Theta^{d^{\mathtt{eq}}-1}_{th}}.
\end{align*}

We now define
\begin{align}
\bar{\RR} & := 1 -  \sum^{d^{\mathtt{eq}}-2}_{d =1} \Big[\frac{dm_d}{d^{\mathtt{avg}}} \frac{d \betap^d}{\gamma + d \Theta^{d^{\mathtt{eq}}-1}_{th}}\Big] \nonumber
\\ & \qquad - \sum^{d^{\max}}_{d =d^{\mathtt{eq}}} \Big[\frac{dm_d}{d^{\mathtt{avg}}} \frac{\alpha d \betap^d}{\gamma + \alpha d \Theta^{d^{\mathtt{eq}}-1}_{th}}\Big]
\\ \implies & \frac{ (d^{\mathtt{eq}}-1)m_{d^{\mathtt{eq}}-1}}{d^{\mathtt{avg}}} \frac{\alpha (d^{\mathtt{eq}}-1) \betap^{d^{\mathtt{eq}}-1}}{\gamma + \alpha (d^{\mathtt{eq}}-1) \Theta^{d^{\mathtt{eq}}-1}_{th}} \leq \bar{\RR} \nonumber
\\ & \qquad \leq \frac{(d^{\mathtt{eq}}-1)m_{d^{\mathtt{eq}}-1}}{d^{\mathtt{avg}}} \frac{(d^{\mathtt{eq}}-1) \betap^{d^{\mathtt{eq}}-1}}{\gamma + (d^{\mathtt{eq}}-1) \Theta^{d^{\mathtt{eq}}-1}_{th}}.
\end{align}
Consequently, there exists a unique $\bar{z}^{d^{\mathtt{eq}}-1}_{\ST} \in [0,1]$ at which
\begin{align*}
    \bar{\RR} & = \frac{(d^{\mathtt{eq}}-1)m_{d^{\mathtt{eq}}-1}}{d^{\mathtt{avg}}} \times 
    \\ & \qquad \frac{(d^{\mathtt{eq}}-1) (\bar{z}^{d^{\mathtt{eq}}-1}_{\ST}+\alpha(1-\bar{z}^{d^{\mathtt{eq}}-1}_{\ST})) \betap^{d^{\mathtt{eq}}-1}}{\gamma + (\bar{z}^{d^{\mathtt{eq}}-1}_{\ST}+\alpha(1-\bar{z}^{d^{\mathtt{eq}}-1}_{\ST})) (d^{\mathtt{eq}}-1) \Theta^{d^{\mathtt{eq}}-1}_{th}}.
\end{align*}

Now, consider the strategy profile of susceptible agents given by 
\begin{equation*}
    z^d_{\ST} = \begin{cases}
    0, \qquad & \text{if} \qquad d > d^{\mathtt{eq}}-1,
    \\ \bar{z}^{d^{\mathtt{eq}}-1}_{\ST}, \qquad & \text{if} \qquad d=d^{\mathtt{eq}}-1
    \\ 1, \qquad & \text{if} \qquad d < d^{\mathtt{eq}}-1.
    \end{cases}
\end{equation*}
Let $z^d_{\IT} = 0$ for all $d \in \DD$. It is easy to verify that $\Theta^{d^{\mathtt{eq}}-1}_{th}$ is a nonzero solution of \eqref{eq:theta_equilibrium_identity} under the above strategy profile. Consequently, the set of infected proportions $\{y^d\}_{d \in \DD}$ satisfying \eqref{eq:yd_equilibrium} with $\Theta^{d^{\mathtt{eq}}-1}_{th}$ constitutes an equilibrium of the dynamics \eqref{eq:sis_switched}.

The uniqueness of the endemic equilibrium can be established in a manner analogous to the similar uniqueness result established for {\bf Case (a)} above, and is omitted in the interest of space.
\end{proof} 

\bibliographystyle{ieeetr}
\bibliography{main}

\end{document}